\newcommand{\N}{\mathbb{N}}
\newcommand{\R}{\mathbb{R}}
\newcommand{\Rbar}{\overline{\R}}
\newcommand{\dd}{\mathrm{d}}
\newcommand{\G}{\mathbb{G}}
\newcommand{\D}{\mathbb{D}}
\newcommand{\K}{\mathbb{K}}
\renewcommand{\P}{\mathbb{P}}
\newcommand{\Z}{\mathbb{Z}}
\newcommand{\W}{\mathbb{W}}
\newcommand{\CC}{\mathcal{C}}
\newcommand{\FF}{\mathcal{F}}
\newcommand{\OO}{\mathcal{O}}
\newcommand{\HH}{\mathcal{H}}
\newcommand{\GG}{\mathcal{G}}
\newcommand{\Smc}{\mathcal{S}}
\newcommand{\cov}{\mathrm{cov}}
\newcommand{\1}{\mathbf{1}}
\newcommand{\ip}[1]{\lfloor #1 \rfloor}
\newcommand{\vep}{\varepsilon}
\renewcommand{\Pr}{\mathbf{P}}
\newcommand{\p}{\overset{\Pr^*}{\to}}
\newcommand{\as}{\overset{\mathrm{a.s.}}{\longrightarrow}}
\newcommand{\aso}{\overset{\mathrm{a.s.}*}{\longrightarrow}}
\newtheorem{prop}{Proposition}
\newtheorem{thm}{Theorem}
\newtheorem{lem}{Lemma}
\title{Nonparametric tests for change-point detection \`a la Gombay and Horv\'ath}
\author{
  Mark Holmes\\
  \small{Department of Statistics} \\
  \small{The University of Auckland} \\
  \small{Private Bag 92019, Auckland 1142, New Zealand} \\
  \small{\texttt{mholmes@stat.auckland.ac.nz}}
  \and
  Ivan Kojadinovic\\ 
  \small{Laboratoire de math\'ematiques et applications, UMR CNRS 5142} \\
  \small{Universit\'e de Pau et des Pays de l'Adour} \\
  \small{B.P. 1155, 64013 Pau Cedex, France} \\
  \small{\texttt{ivan.kojadinovic@univ-pau.fr}}
  \and
  Jean-Fran\c{c}ois Quessy\\ 
  \small{D\'epartement de math\'ematiques et d’informatique} \\
  \small{Universit\'e du Qu\'ebec \`a Trois-Rivi\`eres} \\
  \small{Trois-Rivi\`eres, Québec, C.P. 500, G9A 5H7 Canada} \\
  \small{\texttt{jean-francois.quessy@uqtr.ca}}
}
\begin{document}
\maketitle

\begin{abstract}
The nonparametric test for change-point detection proposed by Gombay and Horv\'ath is revisited and extended in the broader setting of empirical process theory. The resulting testing procedure for potentially multivariate observations is based on a sequential generalization of the functional multiplier central limit theorem and on modifications of Gombay and Horv\'ath's seminal approach that appears to improve the finite-sample behavior of the tests. A large number of candidate test statistics based on processes indexed by lower-left orthants and half-spaces are considered and their performance is studied through extensive Monte Carlo experiments involving univariate, bivariate and trivariate data sets. Finally, practical recommendations are provided and the tests are illustrated on trivariate hydrological data.

\medskip

\noindent {\it Keywords:} half-spaces; lower-left orthants; multiplier central limit theorem; multivariate independent observations; partial-sum process.

\end{abstract}


\section{Introduction}

Let $X_1, \ldots, X_n$ be a sequence of independent $d$-dimensional random vectors for some fixed integer $d \geq 1$. The aim of this work is to study, both theoretically and empirically, nonparametric tests for the detection of a change-point in the sequence $X_1, \ldots, X_n$. The corresponding null hypothesis is
\begin{equation}
\label{H0}
H_0 : \,\exists \, P_0\mbox{ such that } X_1,\ldots,X_n \mbox{ have law } P_0.
\end{equation}
As frequently done, the behavior of the derived tests will be investigated under the alternative hypothesis of a single change-point:
\begin{align}
\nonumber 
H_1 :\, &\exists \mbox{ distinct } P_1 \mbox { and } P_2\mbox{, and } k^\star \in \{1, \ldots, n-1\} \mbox{ such that }\\
\label{H1} 
&X_1, \ldots, X_{k^\star} \mbox{ have law } P_1 \mbox{ and } X_{k^\star+1}, \ldots, X_n \mbox{ have law } P_2.
\end{align} 

There exists an abundant literature on nonparametric tests for change-point detection. We shall not review here procedures designed for serially dependent observations. The approaches proposed for sequences of independent observations differ, on one hand, according to the test statistic, and on the other hand, according to the resampling technique used to compute an approximate $p$-value for the test statistic. In terms of the test statistic, two frequently encountered classes of approaches are those based on $U$-statistics \citep[see e.g.][]{CsoHor88,Fer94,GomHor02,HorHus05} and those based on empirical c.d.f.s \citep[see e.g.][]{GomHor99,HorSha07}. As far as the resampling technique is concerned, one finds approaches based on permutations of the original sequence \cite[see e.g.][]{AntHus01,HorHus05,HorSha07} and approaches that use a weighted bootstrap based on multiplier central limit theorems \cite[see e.g.][]{GomHor99,GomHor02}. For a broader presentation of the field of change-point analysis, we refer the reader to the monographs by \cite{BroDar93} and \cite{CsoHor97}.

In this paper, we revisit and extend the approach proposed by \cite{GomHor99} based on the test statistic 
$$
T_{n,\vee} = \max_{1 \leq k \leq n-1} \frac{k(n-k)}{n^{3/2}} \sup_{x \in \R^d} \left| F_k(x) - F_{n-k}^\star(x) \right|,
$$
where 
$$
F_k(x) = \frac{1}{k} \sum_{i=1}^k \1(X_i \leq x) \qquad \mbox{and} \qquad F_{n-k}^\star(x) = \frac{1}{n-k} \sum_{i=k+1}^n \1(X_i \leq x), \qquad x \in \R^d,
$$
are the empirical c.d.f.s computed from $X_1,\dots,X_k$ and $X_{k+1},\dots,X_n$, respectively \cite[see also][Section 2.6]{CsoHor97}. From a theoretical perspective, we work in the framework of the theory of empirical processes as presented for instance in \cite{vanWel96} and \cite{Kos08}. To obtain results that are valid for many different {\em classes of functions} (in the sense of empirical process theory -- see Section \ref{notation}), we first extend the multiplier central limit theorem \citep[see e.g.][Theorem 10.1 and Corollary 10.3]{Kos08} to the sequential setting. This allows us to obtain interesting generalizations of Theorems~2.1,~2.2 and~2.3 of \cite{GomHor99}. In particular, we propose a slightly different {\em multiplier} process that appears to lead to better behaved tests in the case of moderate sample size. From a more practical perspective, we consider a large number of candidate test statistics based on processes indexed by {\em lower-left orthants} and by {\em half-spaces}, and we study the finite-sample performance of the corresponding tests through extensive Monte Carlo experiments involving univariate, bivariate and trivariate data sets. As we shall see, in the multivariate case, the tests based on processes indexed by half-spaces appear to be substantially more powerful than more classical tests based on multivariate empirical c.d.f.s (i.e., based on processes indexed by lower-left orthants).

The paper is organized as follows. In the second section, we state the theoretical results at the root of the studied class of tests in the broad setting of empirical process theory. The third section is devoted to an application of the theorems of Section~\ref{theory} to the derivation of nonparametric tests for change-point detection for two classes of functions which are the collection of indicator functions of lower-left orthants and the collection of indicator functions of half-spaces. The results of large-scale Monte-Carlo experiments comparing the finite-sample behavior of the tests are partially reported in the fourth section. The last section contains practical recommendations and presents an application of the studied tests to trivariate hydrological data. All the proofs are relegated to the appendices.

Note finally that the code of all the tests studied in this work will be documented and released as an R package whose tentative name is {\tt npcp}.

\section{Theoretical results for change-point detection}
\label{theory}

\subsection{Notation and setting}
\label{notation}

All the random variables used in this work are defined with respect to the underlying probability space $(\Omega,\GG,\Pr)$ and the outer probability measure corresponding to $\Pr$ is denoted by~$\Pr^*$. 

Let $X_1,\ldots,X_n$ be i.i.d.\ $d$-dimensional random vectors with law $P$, and let $\FF$ be a class of measurable functions from $\R^d$ to $\R$. The empirical measure is defined to be $\P_n = n^{-1} \sum_{i=1}^n \delta_{X_i}$, where $\delta_x$ is the measure that assigns a mass of 1 at $x$ and zero elsewhere. For $f\in \FF$, $\P_n f$ denotes the expectation of $f$ under $\P_n$, and $P f$ the expectation under $P$, i.e.,
$$
\P_n f = \frac{1}{n} \sum_{i=1}^n f(X_i) \qquad \mbox{and} \qquad Pf = \int f \dd P.
$$
The empirical process evaluated at $f$ is then defined as $\G_n f = \sqrt{n} (\P_n f - Pf)$.

Saying that $\FF$ is $P$-Donsker means that the sequence of processes $\{ \G_n f : f \in \FF\}$ converges weakly to a $P$-Brownian bridge $\{ \G_P f : f \in \FF\}$ in the space  $\ell^\infty(\FF)$ of bounded functions from $\FF$ to $\R$ equipped with the uniform metric in the sense of Definition~1.3.3 of \citet{vanWel96}. Following usual notational conventions, this weak convergence will simply be denoted by $\G_n \leadsto \G_P$ in $\ell^\infty(\FF)$. Furthermore, we say that $F_e:\R^d \rightarrow \R$ is an envelope for $\FF$ if $F_e$ is measurable and $|f(x)|\le F_e(x)$ for every $f\in \FF$ and $x\in \R^d$.

The advantage of working in this general framework is that the forthcoming results remain valid for many $P$-Donsker classes $\FF$. By taking $\FF$ to be the class of indicator functions of lower-left orthants in $\R^d$, i.e., $\FF = \{y \mapsto \1(y \leq x) : x \in \Rbar^d\}$ with $\Rbar = \R \cup \{-\infty,\infty\}$, one recovers the setting studied in \citet[Section 2.6]{CsoHor97} and based on empirical cumulative distribution functions (c.d.f.s). Although this is a natural choice for $\FF$, many other choices might be of interest in practice such as the class of indicator functions of closed balls, rectangles or half-spaces \citep[see][for a related discussion regarding the choice of $\FF$]{Rom88}.

\subsection{A multiplier central limit theorem for the sequential empirical process}

The sequential empirical process is defined as
$$
\Z_n(s,f) = \frac{1}{\sqrt{n}} \sum_{i=1}^{\ip{ns}} \{f(X_i) - Pf \} = \sqrt{\lambda_n(s)} \G_{\ip{ns}} f, \qquad s \in [0,1], f \in \FF,
$$
where $\lambda_n(s) = \ip{ns} / n$ and with the convention that $\P_0 f = 0$ for all $f \in \FF$.

According to Theorem 2.12.1 of \cite{vanWel96}, $\FF$ being $P$-Donsker is equivalent to $\Z_n \leadsto \Z_P$ in $\ell^\infty([0,1] \times \FF)$, where $\Z_P$ is a tight centered mean-zero Gaussian process with covariance function
$$
\cov\{\Z_P(s,f), \Z_P(t,g)\} = (s \wedge t) (Pfg - Pf Pg)
$$
known as a {\em $P$-Kiefer-M\"uller} process.

Given i.i.d.\ random variables $\xi_1,\dots,\xi_n$ with mean 0 and variance 1, satisfying $\int_0^\infty \{ \Pr(|\xi_1| > x) \}^{1/2} \dd x < \infty$, and independent of the random sample $X_1,\dots,X_n$, we define the following {\em multiplier} version of $\Z_n$:
$$
\widetilde{\Z}_n(s,f) = \frac{1}{\sqrt{n}} \sum_{i=1}^{\ip{ns}} \xi_i \{f(X_i) - Pf \}, \qquad s \in [0,1], f \in \FF.
$$
Notice that the empirical process $\widetilde{\Z}_n$ depends on the unknown map $f \mapsto Pf$ and therefore cannot be computed. With applications in mind, we define two versions of $\widetilde{\Z}_n$ (depending on how $f \mapsto Pf$ is estimated) that can be fully computed. For any $s \in [0,1], f \in \FF
$, let
$$
\widehat{\Z}_n(s,f) = \frac{1}{\sqrt{n}} \sum_{i=1}^{\ip{ns}} \xi_i \{f(X_i) - \P_{\ip{ns}} f \} = \frac{1}{\sqrt{n}} \sum_{i=1}^{\ip{ns}} (\xi_i - \bar \xi_{\ip{ns}}) f(X_i),
$$
where $\bar \xi_{\ip{ns}} = \ip{ns}^{-1} \sum_{i=1}^{\ip{ns}} \xi_i$ and $\bar \xi_0 = 0$ by convention, and let
$$
\widecheck{\Z}_n(s,f) = \frac{1}{\sqrt{n}} \sum_{i=1}^{\ip{ns}} \xi_i \{f(X_i) - \P_n f \}.
$$

The following result is then a partial extension of the multiplier central limit theorem \citep[see e.g.][Theorem 10.1 and Corollary 10.3]{Kos08} to the sequential setting.

\begin{thm}
\label{multiplier}
Let $\FF$ be a $P$-Donsker class with measurable envelope $F_e$ such that $P F_e^2 < \infty$. Then, $(\Z_n, \widetilde{\Z}_n, \widehat{\Z}_n, \widecheck{\Z}_n ) \leadsto (\Z_P,\Z'_P,\Z'_P,\Z'_P )$ in $\{ \ell^\infty([0,1] \times \FF) \}^4$, where $\Z_P'$ is an independent copy of $\Z_P$.
\end{thm}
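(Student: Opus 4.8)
The plan is to establish the joint convergence in two stages: first the joint convergence $(\Z_n, \widetilde{\Z}_n) \leadsto (\Z_P, \Z'_P)$ with $\Z_P, \Z'_P$ independent, and then to deduce the statements for $\widehat{\Z}_n$ and $\widecheck{\Z}_n$ by showing that each differs from $\widetilde{\Z}_n$ by a term that is asymptotically negligible in $\ell^\infty([0,1]\times\FF)$.

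For the first stage, I would proceed by analogy with the (non-sequential) multiplier central limit theorem (cf.\ \cite{Kos08}, Theorem 10.1 and Corollary 10.3). The convergence $\Z_n \leadsto \Z_P$ is exactly Theorem 2.12.1 of \cite{vanWel96}, as recalled in the excerpt. For the joint statement, the natural route is to consider the index class $[0,1]\times\FF$ and treat $\widetilde{\Z}_n$ as a "multiplier empirical process" over this enlarged class: writing $g_{s,f}(x) = \1_{[0,s]}\otimes(f(x)-Pf)$ is not literally a function of $X_i$ alone (the indicator depends on $i$, not $X_i$), so the cleanest approach is to invoke a sequential version of the multiplier inequality. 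Concretely, conditionally on $X_1, \dots, X_n$, the process $\widetilde{\Z}_n$ is a sum of independent mean-zero terms, and one shows (i) finite-dimensional convergence via a Lindeberg--Feller CLT, using $PF_e^2 < \infty$ and $E\xi_1^2 = 1$, with the limiting covariance $\cov\{\Z'_P(s,f),\Z'_P(t,g)\} = (s\wedge t)(Pfg - Pf\,Pg)$, and (ii) asymptotic tightness/equicontinuity, which is where the Donsker property of $\FF$ and the moment condition $\int_0^\infty\{\Pr(|\xi_1|>x)\}^{1/2}\dd x<\infty$ enter, exactly as in the proof of the classical multiplier CLT, the extra supremum over $s\in[0,1]$ being handled by a chaining/maximal-inequality argument over the totally bounded index set (monotonicity in $s$ makes the $[0,1]$ direction easy). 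The independence of the limits $\Z_P$ and $\Z'_P$ follows because, conditionally on the data, $\widetilde{\Z}_n$ converges to a Gaussian process whose (conditional, hence unconditional) law does not depend on the data limit; one makes this rigorous via the conditional-multiplier-CLT machinery (e.g.\ convergence of conditional characteristic functions, or the Hoffmann--J\o rgensen/Kosorok formulation of weak convergence "given the data"), which yields joint weak convergence with independent coordinates.

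For the second stage, I would write
$$
\widehat{\Z}_n(s,f) - \widetilde{\Z}_n(s,f) = -\frac{1}{\sqrt n}\sum_{i=1}^{\ip{ns}}\xi_i\,(\P_{\ip{ns}}f - Pf) = -\sqrt{\lambda_n(s)}\,\bar\xi_{\ip{ns}}\,\sqrt{\ip{ns}}\,(\P_{\ip{ns}}f - Pf),
$$
and note that $\sqrt{\ip{ns}}(\P_{\ip{ns}}f - Pf) = \G_{\ip{ns}}f$ is the ordinary empirical process, which is $O_{\Pr^*}(1)$ uniformly in $f\in\FF$ (by the Donsker property) for $s$ bounded away from $0$, while for $s$ near $0$ the prefactor $\sqrt{\lambda_n(s)}$ and the behavior of $\bar\xi_{\ip{ns}}$ must be controlled; the key point is that $\sqrt{\lambda_n(s)}\,\G_{\ip{ns}}f = \Z_n(s,f)$ is uniformly $O_{\Pr^*}(1)$ over all of $[0,1]\times\FF$, so it suffices to show $\sup_{s\in[0,1]}|\bar\xi_{\ip{ns}}| \to 0$ in (outer) probability, which follows from a maximal inequality for the partial sums of the i.i.d.\ mean-zero $\xi_i$ (e.g.\ via the $L^2$ maximal inequality away from $0$, combined with a direct estimate for $\ip{ns}$ small). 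Similarly, $\widecheck{\Z}_n(s,f) - \widetilde{\Z}_n(s,f) = -\sqrt{\lambda_n(s)}\,\bar\xi_{\ip{ns}}\cdot\sqrt{\ip{ns}}\,(\P_n f - Pf)$; here $\sqrt{\ip{ns}}(\P_n f - Pf) = \sqrt{\lambda_n(s)}\,\G_n f$ is $O_{\Pr^*}(1)$ uniformly, and again the $\bar\xi_{\ip{ns}}\to 0$ uniform bound finishes the argument. Combining with Slutsky's lemma in $\ell^\infty$ and the first stage gives the claimed joint convergence with all three multiplier processes sharing the same limit $\Z'_P$.

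The main obstacle I anticipate is the first stage: proving asymptotic equicontinuity of $\widetilde{\Z}_n$ jointly over $s$ and $f$, and in particular making the "conditional on the data" convergence rigorous enough to extract the unconditional joint convergence of $(\Z_n,\widetilde{\Z}_n)$ to a pair of independent processes. The uniform control near $s=0$ (where $\lambda_n(s)$ degenerates) in the equicontinuity estimate is the delicate technical point; this is handled by exploiting the monotonicity of $s\mapsto\ip{ns}$ together with the moment hypothesis on $\xi_1$, paralleling the truncation argument in the proof of the classical multiplier CLT but now carried out uniformly in the time parameter.
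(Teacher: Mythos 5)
Your Stage~1 outline (joint fidi convergence plus marginal asymptotic tightness of $\Z_n$ and $\widetilde{\Z}_n$, independence of the limits from uncorrelated jointly Gaussian fidis) is consistent with what the paper does, which invokes Theorem~2.12.1 of van der Vaart and Wellner for $\Z_n$ and the unconditional multiplier CLT for $\G_n'$ to get tightness of $\widetilde{\Z}_n$. The genuine gap is in Stage~2, in the comparison of $\widehat{\Z}_n$ with $\widetilde{\Z}_n$. Your difference identity is correct and amounts to $\widetilde{\Z}_n(s,f)-\widehat{\Z}_n(s,f)=\bar\xi_{\ip{ns}}\,\Z_n(s,f)$, but the step you reduce it to, namely $\sup_{s\in[0,1]}|\bar\xi_{\ip{ns}}|\p 0$, is false: for $s\in[1/n,2/n)$ one has $\bar\xi_{\ip{ns}}=\xi_1$, so the supremum is bounded below by $|\xi_1|$ and is only bounded in probability, not vanishing. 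No ``direct estimate for $\ip{ns}$ small'' can rescue this, because near $s=0$ the averaged multipliers are simply not small; what is small there is $\Z_n(s,f)$ itself, so the two factors must be balanced against each other rather than bounded separately. This is precisely why the paper proves a dedicated lemma (Lemma~\ref{tildeZn_hatZn}): writing the difference as $\bigl(n^{-1/2}\sum_{i\le\ip{ns}}\xi_i\bigr)(\P_{\ip{ns}}f-Pf)$, it controls $\|\P_k f-Pf\|_\FF$ at the rate $a_k=k^{-1/2}(\log\log k)^{1/2}$ via the law of the iterated logarithm for empirical processes (this is where $PF_e^2<\infty$ is used) and $|\bar\xi_k|$ at the same rate via the LIL for i.i.d.\ means, so that the product is at most $\max_k(Y_k^*/a_k)\times n^{-1/2}\log\log n\times\max_k(\bar\xi_k/a_k)\to 0$ almost surely, uniformly over $k\le n$. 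Alternatively your argument could be repaired by splitting at $\ip{ns}\le m_n$ with $m_n\to\infty$, $m_n=o(n^{1/3})$, controlling the head by Kolmogorov's maximal inequality for $\sum_{i\le k}\xi_i$ together with the envelope bound $|\Z_n(s,f)|\le n^{-1/2}\sum_{i\le m_n}\{F_e(X_i)+PF_e\}$, and the tail by $\sup_{k>m_n}|\bar\xi_k|\to 0$ (SLLN); but some such balancing is indispensable, and as written your argument does not go through.

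By contrast, the gap is harmless for the comparison of $\widecheck{\Z}_n$ with $\widetilde{\Z}_n$: there the natural pairing is the one the paper uses, $\sup_{s}\|\widetilde{\Z}_n(s,f)-\widecheck{\Z}_n(s,f)\|_\FF=\|\P_nf-Pf\|_\FF\times\sup_{s}|n^{-1/2}\sum_{i\le\ip{ns}}\xi_i|$, where the second factor is $O_{\Pr}(1)$ (it converges to the supremum of $|$Brownian motion$|$) and the first tends to zero in outer probability by Glivenko--Cantelli; no statement about $\bar\xi_{\ip{ns}}$ uniformly in $s$ is needed. So your plan works for $\widecheck{\Z}_n$ after regrouping the factors, but for $\widehat{\Z}_n$ the key lemma (or an equivalent balanced estimate near $s=0$) is missing.
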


Theorem~\ref{multiplier} suggests the following interpretation: when $n$ is large, $\widetilde{\Z}_n$ can be regarded as ``almost'' an independent copy of $\Z_n$, while $\widehat{\Z}_n$ and $\widecheck{\Z}_n$ can be regarded as computable copies of $\widetilde{\Z}_n$. As we shall see, this interpretation is at the root of the resampling technique considered in Section~\ref{tests}. 

Although each of $\widehat{\Z}_n$ and $\widecheck{\Z}_n$ could be regarded as ``almost'' an independent copy of~$\Z_n$, their behavior for moderate $n$ might differ quite substantially. In Section~\ref{sims}, we empirically investigate which of $\widehat{\Z}_n$ or $\widecheck{\Z}_n$ leads to tests for change-point detection with the best finite-sample properties.

\subsection{Application to change-point detection}

Recall that the null and alternative hypotheses under consideration are given in~(\ref{H0}) and~(\ref{H1}), respectively.

Let $\FF$ be a class of measurable functions. In order to test the aforementioned hypotheses, we extend the approach studied in detail by \citet[Section 2.6]{CsoHor97} and compare, for all $k \in \{1, \dots, n-1\}$, 
$$
\P_k f = \frac{1}{k} \sum_{i=1}^k f(X_i) \qquad \mbox{and} \qquad \P_{n-k}^\star f = \frac{1}{n-k} \sum_{i=k+1}^n f(X_i), \qquad f \in \FF.
$$
Analogous to \citet[Section 2.6]{CsoHor97}, we define the process
$$
\D_n(s,f) = \sqrt{n} \, \lambda_n(s) \left\{ 1 - \lambda_n(s) \right\} \left(
\P_{\ip{ns}}f - \P_{n-\ip{ns}}^\star f  \right), \qquad s \in [0,1], f \in \FF,
$$
where $\lambda_n(s) = \ip{ns} / n$ and with the convention that $\P_0 f = 0$ and $\P_0^\star f = 0$ for all $f \in \FF$. Notice that, under the null hypothesis, for any $s \in [0,1]$ and $f \in \FF$, we have
\begin{equation}
\label{DnH0}
\D_n(s,f) = \left\{ 1 - \lambda_n(s) \right\} \Z_n(s,f) - \lambda_n(s) \{ \Z_n(1,f) - \Z_n(s,f) \} = \Z_n(s,f) - \lambda_n(s) \Z_n(1,f).
\end{equation}

With resampling in mind, we define two {\em multiplier} versions of $\D_n$ based on the multiplier versions of $\Z_n$ defined in the previous subsection. For any $s \in [0,1]$ and $f \in \FF$, 
let
$$
\widecheck{\D}_n(s,f) = \{ 1 - \lambda_n(s) \} \widecheck{\Z}_n(s,f) - \lambda_n(s) \{ \widecheck{\Z}_n(1,f) - \widecheck{\Z}_n(s,f) \} = \widecheck{\Z}_n(s,f) - \lambda_n(s) \widecheck{\Z}_n(1,f),
$$
and, following \cite{GomHor99}, let
\begin{equation}
\label{hatDn}
\widehat{\D}_n(s,f) = \{ 1 - \lambda_n(s) \} \widehat{\Z}_n(s,f) - \lambda_n(s) \widehat{\Z}_n^\star(s,f), 
\end{equation}
where 
\begin{equation}
\label{hatZnstar}
\widehat{\Z}_n^\star(s,f) = \frac{1}{\sqrt{n}} \sum_{i=\ip{ns}+1}^n (\xi_i - \bar \xi_{n-\ip{ns}}^\star) f(X_i) \qquad \mbox{with} \qquad \bar \xi_{n-\ip{ns}}^\star = \frac{1}{n-\ip{ns}} \sum_{i=\ip{ns}+1}^n \xi_i, 
\end{equation}
and $\bar \xi_0^\star = 0$ by convention. Notice that the process $\widehat{\Z}_n^\star$ defined above is, up to a small error term vanishing as $n \to\infty$, the version of the process $(s,f) \mapsto \widehat{\Z}_n(1-s,f)$ computed from the ``reversed'' sequence $(\xi_n,X_n),(\xi_{n-1},X_{n-1}),\dots,(\xi_1,X_1)$.

The following two results extend Theorems~2.1,~2.2 and~2.3 of \cite{GomHor99} and suggest, for large $n$ and under the null hypothesis, to interpret each of $\widehat{\D}_n$ and $\widecheck{\D}_n$  as an ``almost'' independent copy of $\D_n$.

\begin{thm}
\label{thm_H0}
Assume that $H_0$ holds and that $\FF$ is $P_0$-Donsker with measurable envelope $F_e$ such that $P_0 F_e^2 < \infty$. Then, $( \D_n,\widehat{\D}_n,\widecheck{\D}_n) \leadsto ( \D_{P_0},\D'_{P_0},\D'_{P_0} )$ in $\{ \ell^\infty([0,1] \times \FF) \}^3$, where $\D_{P_0}(s,f) = \Z_{P_0}(s,f) - s \Z_{P_0}(1,f)$, $s \in [0,1]$, $f \in \FF$, and $\D'_{P_0}$ is an independent copy of $\D_{P_0}$.
\end{thm}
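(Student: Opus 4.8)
The plan is to deduce everything from Theorem~\ref{multiplier} via the continuous mapping theorem, together with a separate argument controlling the error incurred by replacing centering by $P_0 f$ with centering by the relevant empirical averages. First I would observe that, under $H_0$, the identity~(\ref{DnH0}) expresses $\D_n$ as the image of $\Z_n$ under the continuous map $\Phi : \ell^\infty([0,1]\times\FF) \to \ell^\infty([0,1]\times\FF)$ defined by $(\Phi z)(s,f) = z(s,f) - s\, z(1,f)$; the factor $\lambda_n(s)$ can be replaced by $s$ at the cost of a term that is $O(n^{-1})$ uniformly, hence asymptotically negligible by Slutsky. The same map $\Phi$, applied to $\widecheck{\Z}_n$, produces $\widecheck{\D}_n$ up to the analogous $\lambda_n(s)$-versus-$s$ discrepancy. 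So for the pair $(\D_n,\widecheck{\D}_n)$ the result follows immediately from the joint convergence $(\Z_n,\widecheck{\Z}_n)\leadsto(\Z_{P_0},\Z'_{P_0})$ in Theorem~\ref{multiplier} and the continuous mapping theorem applied coordinatewise, noting that $(\Phi\Z_{P_0},\Phi\Z'_{P_0}) = (\D_{P_0},\D'_{P_0})$ with $\D'_{P_0}$ an independent copy of $\D_{P_0}$ because $\Z'_{P_0}$ is an independent copy of $\Z_{P_0}$.

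The genuinely new work is the treatment of $\widehat{\D}_n$, since it is built from $\widehat{\Z}_n$ and the ``reversed'' process $\widehat{\Z}_n^\star$ rather than from a single multiplier process to which~$\Phi$ applies directly. Here I would proceed in two steps. First, I would show that $\widehat{\Z}_n^\star$ is asymptotically equivalent, uniformly over $[0,1]\times\FF$, to $\widehat{\Z}_n(1,\cdot) - \widehat{\Z}_n(\cdot,\cdot)$; this is exactly the remark in the text that $\widehat{\Z}_n^\star(s,f)$ is, up to a vanishing error, the process $\widehat{\Z}_n(1-s,f)$ computed from the reversed sample. Concretely, $\widehat{\Z}_n^\star(s,f) = n^{-1/2}\sum_{i=\ip{ns}+1}^n(\xi_i - \bar\xi^\star_{n-\ip{ns}})f(X_i)$, whereas $\widetilde{\Z}_n(1,f) - \widetilde{\Z}_n(s,f) = n^{-1/2}\sum_{i=\ip{ns}+1}^n \xi_i\{f(X_i)-P_0 f\}$; subtracting, the difference between $\widehat{\Z}_n^\star$ and $\widetilde{\Z}_n(1,\cdot)-\widetilde{\Z}_n(\cdot,\cdot)$ is a recentering term of the form $\{\bar\xi_n^{\text{(tail)}} - P_0 f\text{-type}\}\times\{\text{sum of }f(X_i)\}$ that one bounds using $P_0 F_e^2 < \infty$, the fact that $\bar\xi^\star_{n-\ip{ns}}\to 0$ suitably uniformly, and a maximal inequality for the Donsker class $\FF$ — the same ingredients already used to prove Theorem~\ref{multiplier}. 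Once $\widehat{\Z}_n^\star \approx \widetilde{\Z}_n(1,\cdot) - \widetilde{\Z}_n(\cdot,\cdot)$ and $\widehat{\Z}_n \approx \widetilde{\Z}_n$ (both in the sense of $\p 0$ in $\ell^\infty$, which is part of Theorem~\ref{multiplier}), plugging into~(\ref{hatDn}) gives $\widehat{\D}_n(s,f) = (1-\lambda_n(s))\widehat{\Z}_n(s,f) - \lambda_n(s)\widehat{\Z}_n^\star(s,f) = \widehat{\Z}_n(s,f) - \lambda_n(s)\widehat{\Z}_n(1,f) + o_{\Pr^*}(1) = (\Phi\widehat{\Z}_n)(s,f) + o_{\Pr^*}(1)$, uniformly.

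With that reduction in hand, the second step is routine: by Theorem~\ref{multiplier} the triple $(\Z_n,\widehat{\Z}_n,\widecheck{\Z}_n)$ converges jointly to $(\Z_{P_0},\Z'_{P_0},\Z'_{P_0})$, applying $\Phi$ coordinatewise (continuous mapping) and using Slutsky to absorb the $o_{\Pr^*}(1)$ terms and the $\lambda_n(s)$-versus-$s$ corrections yields $(\D_n,\widehat{\D}_n,\widecheck{\D}_n)\leadsto(\Phi\Z_{P_0},\Phi\Z'_{P_0},\Phi\Z'_{P_0}) = (\D_{P_0},\D'_{P_0},\D'_{P_0})$, with the stated independence inherited from Theorem~\ref{multiplier}. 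The main obstacle is the first step for $\widehat{\Z}_n^\star$: one must show the recentering error vanishes \emph{uniformly in $s$ and $f$ simultaneously}, including near the endpoints $s\approx 0$ and $s\approx 1$ where $\ip{ns}$ or $n-\ip{ns}$ is small and the sample means $\bar\xi_{\ip{ns}}$, $\bar\xi^\star_{n-\ip{ns}}$ are not well-behaved; handling these boundary regions (as in the proof of the non-sequential multiplier CLT, by a separate crude bound on short blocks combined with the moment condition $\int_0^\infty\{\Pr(|\xi_1|>x)\}^{1/2}\dd x<\infty$) is the delicate part, and is presumably where the bulk of the appendix argument for this theorem will go.
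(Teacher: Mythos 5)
Your treatment of $\D_n$ and $\widecheck{\D}_n$ is exactly the paper's: both are obtained from $(\Z_n,\widecheck{\Z}_n)\leadsto(\Z_{P_0},\Z'_{P_0})$ in Theorem~\ref{multiplier} via the map $z\mapsto z(s,f)-\lambda_n(s)z(1,f)$ and the continuous mapping theorem (the paper routes this through an intermediate process $\widetilde{\D}_n(s,f)=\widetilde{\Z}_n(s,f)-\lambda_n(s)\widetilde{\Z}_n(1,f)$, and the $\lambda_n(s)$-versus-$s$ correction is handled as you say). You have also correctly isolated the only nontrivial step, namely that $\sup_{s\in[0,1]}\bigl\|\{\widetilde{\Z}_n(1,f)-\widetilde{\Z}_n(s,f)\}-\widehat{\Z}_n^\star(s,f)\bigr\|_\FF$ vanishes. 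However, this step is where your proposal has a genuine gap: you defer it to ``a maximal inequality'' plus ``a separate crude bound on short blocks combined with the moment condition,'' and as stated that toolkit does not close the argument. The difference in question equals $-\bigl(n^{-1/2}\sum_{i=\ip{ns}+1}^{n}\xi_i\bigr)\bigl(\P^\star_{n-\ip{ns}}f-P_0f\bigr)$, and near $s\approx 1$ the second factor is \emph{not} small (for $n-\ip{ns}=O(1)$ it is of order one uniformly over $\FF$), while the first factor is only $O_{\Pr}(1)$ globally; what saves the product is the law-of-the-iterated-logarithm balancing between the two factors as the block length varies, which is precisely the content of the paper's Lemma~\ref{tildeZn_hatZn} and is not delivered by the multiplier moment condition alone.

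The paper closes this step with a one-line symmetry argument that your sketch gestures at but never actually uses: written as a maximum over $1\le\ip{ns}\le n$, the supremum above is \emph{exactly} the quantity $\sup_{s}\|\widetilde{\Z}_n(s,f)-\widehat{\Z}_n(s,f)\|_\FF$ of Lemma~\ref{tildeZn_hatZn} computed from the reversed sequence $(\xi_n,X_n),\dots,(\xi_1,X_1)$, which has the same distribution as the original i.i.d.\ sequence; hence Lemma~\ref{tildeZn_hatZn} applies verbatim and gives convergence to zero in outer probability with no new estimate. You should either invoke this equidistribution argument, or accept that your ``direct'' route amounts to re-proving Lemma~\ref{tildeZn_hatZn} for backward partial sums (LIL for the empirical process with square-integrable envelope together with the LIL for $\bar\xi_k$), rather than the short-block/moment-condition patch you describe. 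With that repair, the rest of your argument (plugging into~(\ref{hatDn}), Slutsky, and joint continuous mapping to get the triple with $\D'_{P_0}$ independent of $\D_{P_0}$) goes through as in the paper.
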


As we continue, for any $J:\FF \to \R$, $\sup_{f \in \FF} | Jf |$ will be denoted by $\|J\|_\FF$. Also, for any sequence of maps $Y_1,Y_2,\dots$, each from $\Omega$ to $\R$, we say that the sequence $Y_n$ is bounded in outer probability if, for any $\vep > 0$, there exists $M > 0$ such that $\sup_{n \in \N} \Pr^* \left( |Y_n| > M \right) < \vep$. 

\begin{thm}
\label{thm_H1}
Assume that $H_1$ holds with $k^\star = \ip{nt}$ for some $t \in (0,1)$ and that $\FF$ is $P_1$ and $P_2$-Donsker with measurable envelope $F_e$ such that $P_1 F_e^2 < \infty$ and $P_2 F_e^2 < \infty$. Then,
\begin{enumerate}[(i)]
\item 
$\sup_{s \in [0,1]} \| n^{-1/2} \D_n(s,f) - K_t(s,f) \|_\FF \p 0$, \\
where 
$K_t(s,f) = ( P_1f - P_2 f ) ( s \wedge t) \{ 1-  (s \vee t)\}$,
\item $\sup_{s \in [0,1]} \| \widehat{\D}_n(s,f) \|_\FF$ is bounded in outer probability,
\item $\widecheck{\D}_n$ converges weakly in $\ell^\infty([0,1] \times \FF)$.
\end{enumerate}
\end{thm}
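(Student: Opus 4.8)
The plan is to prove assertions (i)--(iii) in turn. Throughout I will write $m_1=\ip{nt}$ and $m_2=n-\ip{nt}$ for the lengths of the two homogeneous blocks $X_1,\dots,X_{m_1}\sim P_1$ and $X_{m_1+1},\dots,X_n\sim P_2$, put $M_n(s)=n^{-1/2}\sum_{i=1}^{\ip{ns}}\xi_i$, and set $\mu_i(f)=P_1f$ for $i\le m_1$ and $\mu_i(f)=P_2f$ for $i>m_1$; I will use repeatedly that $\sup_{f\in\FF}|P_jf|\le(P_jF_e^2)^{1/2}<\infty$, that $\sup_{s\in[0,1]}|M_n(s)|$ is bounded in outer probability (functional central limit theorem for the i.i.d.\ $\xi_i$), and that $\sup_{m\ge1}|\bar\xi_m|$ is almost surely finite (strong law of large numbers). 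For (i), I would start from the identity $n^{-1/2}\D_n(s,f)=n^{-1}\sum_{i=1}^{\ip{ns}}f(X_i)-\lambda_n(s)\,n^{-1}\sum_{i=1}^nf(X_i)$, obtained by cancelling the two terms $\lambda_n(s)\P_{\ip{ns}}f$ in the definition of $\D_n$, and then split the first partial sum into its $P_1$- and $P_2$-parts, centring each part by its mean, to get $n^{-1}\sum_{i=1}^{\ip{ns}}f(X_i)=n^{-1}(\ip{ns}\wedge m_1)P_1f+n^{-1}(\ip{ns}-m_1)_+P_2f+r_n(s,f)$, where $r_n(s,f)$ is a sum of two terms of the form $n^{-1}\sum\{f(X_i)-P_jf\}$ taken over at most $m_j$ indices of block $j$. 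Such a partial sum over a block, suitably re-indexed, equals, up to the factor $(m_j/n)^{1/2}\le1$, a value of the sequential empirical process of an i.i.d.\ $P_j$-sample; since $\FF$ is $P_j$-Donsker that process is asymptotically tight (Theorem~2.12.1 of \citet{vanWel96}), so dividing by $n$ rather than by $m_j^{1/2}$ gives $\sup_{s\in[0,1]}\|r_n(s,f)\|_\FF\p0$. Together with $\lambda_n(s)\to s$, $n^{-1}(\ip{ns}\wedge m_1)\to s\wedge t$ and $n^{-1}(\ip{ns}-m_1)_+\to(s-t)_+$ uniformly in $s$, this yields $\sup_{s\in[0,1]}\bigl\|n^{-1/2}\D_n(s,f)-\{(s\wedge t)P_1f+(s-t)_+P_2f-s(tP_1f+(1-t)P_2f)\}\bigr\|_\FF\p0$, and a short case distinction ($s\le t$ versus $s>t$) identifies the bracketed limit with $K_t(s,f)$.

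For (ii), the key observation is that $\sum_{i=1}^{\ip{ns}}(\xi_i-\bar\xi_{\ip{ns}})=0$, so the block mean of $f$ may be subtracted without altering $\widehat{\Z}_n$, leaving only a cross-block residue:
$$\widehat{\Z}_n(s,f)=\frac1{\sqrt n}\sum_{i=1}^{\ip{ns}}\xi_i\{f(X_i)-\mu_i(f)\}-\bar\xi_{\ip{ns}}\,\frac1{\sqrt n}\sum_{i=1}^{\ip{ns}}\{f(X_i)-\mu_i(f)\}+(P_1f-P_2f)\Bigl\{M_n\bigl(\tfrac{m_1\wedge\ip{ns}}{n}\bigr)-\tfrac{m_1\wedge\ip{ns}}{\sqrt n}\,\bar\xi_{\ip{ns}}\Bigr\}.$$
Split into their $P_1$- and $P_2$-parts, the first sum is, up to a factor $\le1$, a multiplier sequential empirical process of one of the blocks and the second is an ordinary one, so by Theorem~\ref{multiplier} and Theorem~2.12.1 of \citet{vanWel96} applied within each block both are bounded in outer probability uniformly over $s\in[0,1]$ and $f\in\FF$; the cross-block term is bounded in outer probability uniformly as well, because $\tfrac{m_1\wedge\ip{ns}}{\sqrt n}\bar\xi_{\ip{ns}}=\tfrac{m_1\wedge\ip{ns}}{\ip{ns}}M_n(\ip{ns}/n)$ with $\tfrac{m_1\wedge\ip{ns}}{\ip{ns}}\le1$, and $\sup_{f\in\FF}|P_1f-P_2f|<\infty$. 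Hence $\sup_{s\in[0,1]}\|\widehat{\Z}_n(s,f)\|_\FF$ is bounded in outer probability; the same argument applied to the ``reversed'' block structure handles $\widehat{\Z}_n^\star$, and since $0\le\lambda_n(s)\le1$ the bound $\sup_{s\in[0,1]}\|\widehat{\D}_n(s,f)\|_\FF\le\sup_{s\in[0,1]}\|\widehat{\Z}_n(s,f)\|_\FF+\sup_{s\in[0,1]}\|\widehat{\Z}_n^\star(s,f)\|_\FF$ finishes the argument.

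For (iii), I would first write, using $\ip{n(s\wedge t)}=\ip{ns}\wedge m_1$,
$$\widecheck{\Z}_n(s,f)=H_n(s,f)+P_1f\,M_n(s\wedge t)+P_2f\,\{M_n(s)-M_n(s\wedge t)\}-\P_nf\,M_n(s),\qquad H_n(s,f)=\frac1{\sqrt n}\sum_{i=1}^{\ip{ns}}\xi_i\{f(X_i)-\mu_i(f)\}.$$
By Glivenko--Cantelli (a consequence of the Donsker property) $\sup_{f\in\FF}|\P_nf-(tP_1f+(1-t)P_2f)|\to0$ almost surely, so $\P_nf$ can be replaced by its deterministic limit at the price of a uniform error bounded by $\sup_{f\in\FF}|\P_nf-(tP_1f+(1-t)P_2f)|\cdot\sup_{s\in[0,1]}|M_n(s)|=o_{\Pr^*}(1)$. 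The next and main step is to show that $(H_n,M_n)$ converges weakly in $\ell^\infty([0,1]\times\FF)\times\ell^\infty([0,1])$: $H_n$, obtained by concatenating its two homogeneous-block pieces at $s=t$, is asymptotically tight by Theorem~\ref{multiplier} applied within each block, $M_n\leadsto B$ for a standard Brownian motion $B$ by Donsker's theorem, the finite-dimensional distributions of $(H_n,M_n)$ converge by the multivariate central limit theorem (applied within each block), and, because $\xi_i$ is independent of $X_i$ with mean $0$ and variance $1$ while $f(X_i)-\mu_i(f)$ has mean $0$, the limiting covariance between the two coordinates vanishes, so the two limit processes are independent. Since $s\mapsto M_n(s\wedge t)$, $s\mapsto M_n(s)-M_n(s\wedge t)$ and $s\mapsto M_n(s)$ enter the displayed expression linearly with $f$-coefficients bounded by $(P_jF_e^2)^{1/2}$, the map taking $(H_n,M_n)$ to that expression is continuous into $\ell^\infty([0,1]\times\FF)$, and the continuous mapping theorem gives $\widecheck{\Z}_n\leadsto\widecheck{\Z}_\infty$ for a tight centred Gaussian limit $\widecheck{\Z}_\infty$. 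Finally, $\widecheck{\D}_n(s,f)=\widecheck{\Z}_n(s,f)-\lambda_n(s)\widecheck{\Z}_n(1,f)$ with $\sup_{s\in[0,1]}|\lambda_n(s)-s|\to0$, so one more application of the continuous mapping theorem (plus an $o_{\Pr^*}(1)$ Slutsky-type error) gives weak convergence of $\widecheck{\D}_n$ in $\ell^\infty([0,1]\times\FF)$, with limit $(s,f)\mapsto\widecheck{\Z}_\infty(s,f)-s\,\widecheck{\Z}_\infty(1,f)$.

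The hardest step will be the joint weak convergence of $(H_n,M_n)$ in (iii): establishing the asymptotic tightness of the multiplier process $H_n$ across the change-point, together with its joint finite-dimensional convergence with the multiplier partial-sum process $M_n$, amounts to reworking the ingredients (a conditional multiplier central limit theorem together with Donsker's theorem for the multipliers) that underlie the proof of Theorem~\ref{multiplier}. By contrast, part (i) is essentially a uniform law of large numbers and part (ii) reduces to a block-wise boundedness argument once the cross-block residue has been isolated.
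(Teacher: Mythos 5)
Your proposal is correct and, in substance, follows the same route as the paper: part (i) rests on the same identity $\lambda_n(s)\{1-\lambda_n(s)\}(\P_{\ip{ns}}f-\P_{n-\ip{ns}}^\star f)=\lambda_n(s)(\P_{\ip{ns}}f-\P_nf)$ followed by a block-wise law-of-large-numbers argument, and part (iii) has exactly the paper's structure (joint weak convergence of the block multiplier empirical processes with the multiplier partial-sum process, independence of the limits via uncorrelated jointly Gaussian fidis, uniform convergence of $\P_nf$ to $tP_1f+(1-t)P_2f$, then the continuous mapping theorem applied to $\widecheck{\D}_n(s,f)=\widecheck{\Z}_n(s,f)-\lambda_n(s)\widecheck{\Z}_n(1,f)$). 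The only genuine divergence is in part (ii): the paper bounds the \emph{uncentered} sums through its auxiliary Lemma~\ref{lem:boundprob_1} (which centers via the envelope and $PF_e<\infty$) and the four-term decomposition $A_n,B_n,C_n,D_n$, whereas you center at the block means from the start and exploit $\sum_{i\le\ip{ns}}(\xi_i-\bar\xi_{\ip{ns}})=0$ to reduce everything to centered block multiplier/empirical processes plus an explicit cross-block drift $(P_1f-P_2f)\{M_n((m_1\wedge\ip{ns})/n)-\tfrac{m_1\wedge\ip{ns}}{\ip{ns}}M_n(\ip{ns}/n)\}$; your identity is correct and this variant dispenses with the paper's separate lemma at the cost of no extra hypotheses, so it is a slightly cleaner packaging of the same block-wise boundedness idea. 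One small point you gloss over: the paper's notion of ``bounded in outer probability'' requires $\sup_{n\in\N}\Pr^*(|Y_n|>M)<\vep$, so asymptotic tightness (a $\limsup$ statement) must be supplemented, as in the proof of Lemma~\ref{lem:boundprob_1}, by the observation that for each fixed $n$ the relevant supremum is dominated by an almost surely finite (measurable) random variable; this is routine for every term in your decompositions but should be said.
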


The previous result will be used in the next section to show that various tests for change-point detection based on $\D_n$ will tend to reject $H_0$ under $H_1$ as $n$ increases.

\section{Tests for change-point detection \`a la Gombay and Horv\'ath}
\label{tests}

The aim of this section is to use the results of the previous section to derive tests for change-point detection in the spirit of those proposed by \cite{GomHor99}. Among the many possible choices for $\FF$, we consider the following two: 
\begin{description}
\item[($\CC 1$)] the collection $\OO$ of indicator functions of lower-left orthants in $\R^d$, where 
$$
\OO = \{f_x(y) = \1(y \leq x) : x \in \Rbar^d\};
$$
\item[($\CC 2$)] the collection $\HH$ of indicator functions of half-spaces in $\R^d$, where 
$$
\HH = \{f_{a,b}(y) = \1(a^\top y \leq b) : a \in \Smc_d, b \in \Rbar \}
$$ 
and $\Smc_d$ is the subset of $\R^d$ composed of vectors with Euclidean norm one.
\end{description}
It is well-known that lower-left orthants and half-spaces are Vapnik-\v Chervonenkis classes of sets. Consequently, $\OO$ and $\HH$ are $P$-Donsker for any law $P$ \citep[see e.g.][]{vanWel96,Kos08}.

As we continue, in the case of choice~($\CC 1$), for any $s \in [0,1]$ and $f_x \in \OO$, $\D_n(s,f_x)$, $\widehat{\D}_n(s,f_x)$ and $\widecheck{\D}_n(s,f_x)$ will simply be denoted by $\D_n(s,x)$, $\widehat{\D}_n(s,x)$ and $\widecheck{\D}_n(s,x)$, respectively. 
Similarly, in the case of choice~($\CC 2$), for any $s \in [0,1]$ and $f_{a,b} \in \HH$, $\D_n(s,f_{a,b})$, $\widehat{\D}_n(s,f_{a,b})$ and $\widecheck{\D}_n(s,f_{a,b})$ will be denoted by $\D_n(s,a,b)$, $\widehat{\D}_n(s,a,b)$ and $\widecheck{\D}_n(s,a,b)$, respectively.

In the framework under consideration, a change in the sequence $X_1,\dots,X_n$ can occur at any point $k \in \{1,\dots,n-1\}$. A test for change-point detection could therefore be obtained by first defining a test statistic for any possible change-point $k \in \{1,\dots,n-1\}$, and then by combining the resulting $n-1$ statistics into a global statistic using some function from $\psi:\R^{n-1} \to \R$.

In the case of choice ($\CC 1$), two natural possibilities for the $n-1$ change-point statistics are respectively
$$
S_{n,k} = \int_{\R^d} \left\{ \D_n \left( \frac{k}{n},x \right) \right\}^2 \dd F_n(x) = \frac{1}{n} \sum_{i=1}^n \left\{ \D_n \left( \frac{k}{n},X_i \right) \right\}^2,  \qquad k \in \{1,\dots,n-1\},
$$
where $F_n(x)=\P_nf_x$, $x \in \Rbar^d$, is the empirical c.d.f.\ computed from $X_1,\dots,X_n$, and
$$
T_{n,k} = \sup_{x \in \R^d} \left|  \D_n \left( \frac{k}{n},x \right) \right| = \max_{1 \leq i \leq n} \left|  \D_n \left( \frac{k}{n},X_i \right) \right|, \qquad k \in \{1,\dots,n-1\}.
$$
Two natural choices for the function $\psi$ are the maximum and the arithmetic mean which leads to the following four global statistics:
\begin{gather*}
S_{n,\vee} = \max_{1 \leq k \leq n-1} S_{n,k} = \sup_{s \in [0,1]} \int_{\R^d} \left\{ \D_n \left( s,x \right) \right\}^2 \dd F_n(x), \\
T_{n,\vee} = \max_{1 \leq k \leq n-1} T_{n,k} = \sup_{s \in [0,1]} \sup_{x \in \R^d} \left|  \D_n \left( s,x \right) \right|,\\
S_{n,+} = \frac{1}{n} \sum_{k=1}^{n-1} S_{n,k} = \int_0^1 \int_{\R^d} \left\{ \D_n \left( s,x \right) \right\}^2 \dd F_n(x) \dd s, \\
T_{n,+} = \frac{1}{n} \sum_{k=1}^{n-1} T_{n,k} = \int_0^1 \sup_{x \in \R^d} \left|  \D_n \left( s,x \right) \right| \dd s.
\end{gather*}
Note that $T_{n,\vee}$ is the global statistic used in \cite{GomHor99}.

In the case of choice ($\CC 2$), for any $k \in \{1,\dots,n-1\}$, we first define
$$
U_{n,k} = \int_{\Smc_d^+} \int_{\R} \left\{ \D_n \left( \frac{k}{n},a,b \right) \right\}^2 \dd F_{a,n}(b) \dd a = \int_{\Smc_d^+} \frac{1}{n} \sum_{i=1}^n \left\{ \D_n \left( \frac{k}{n},a,a^\top X_i \right) \right\}^2 \dd a, 
$$
where $\Smc_d^+ = \{a \in \Smc_d : a_1 > 0\}$ and, for any $a \in \Smc_d^+$, $F_{a,n}$ is the (univariate) empirical c.d.f.\ computed from the projected sample $a^\top X_1,\dots,a^\top X_n$, and
$$
V_{n,k} = \sup_{a \in \Smc_d^+} \sup_{b \in \R} \left|  \D_n \left( \frac{k}{n},a,b \right) \right| = \sup_{a \in \Smc_d^+} \max_{1 \leq i \leq n} \left|  \D_n \left( \frac{k}{n},a,a^\top X_i \right) \right|.
$$
As in the case of choice ($\CC 1$), this leads to four global statistics depending on whether the change-point statistics are combined using the maximum or the arithmetic mean, i.e.,
\begin{gather*}
U_{n,\vee} = \max_{1 \leq k \leq n-1} U_{n,k} = \sup_{s \in [0,1]} \int_{\Smc_d^+} \int_{\R} \left\{ \D_n \left( s,a,b \right) \right\}^2 \dd F_{a,n}(b) \dd a, \\
V_{n,\vee} = \max_{1 \leq k \leq n-1} V_{n,k} = \sup_{s \in [0,1]} \sup_{a \in \Smc_d^+} \sup_{b \in \R} \left|  \D_n \left( s,a,b \right) \right|,\\
U_{n,+} = \frac{1}{n} \sum_{k=1}^{n-1} U_{n,k} = \int_0^1 \int_{\Smc_d^+} \int_{\R} \left\{ \D_n \left( s,a,b \right) \right\}^2 \dd F_{a,n}(b) \dd a \dd s, \\
V_{n,+} = \frac{1}{n} \sum_{k=1}^{n-1} V_{n,k} = \int_0^1 \sup_{a \in \Smc_d^+} \sup_{b \in \R} \left|  \D_n \left( s,a,b \right) \right| \dd s.
\end{gather*}
In our Monte Carlo experiments, the integral and the supremum over $a \in \Smc_d^+$ in the definitions of $U_{n,k}$ and $V_{n,k}$, respectively, were approximated numerically based on a uniform discretization of $\Smc_d^+$ into $m$ points. The implementation of the tests based on $U_{n,k}$ and $V_{n,k}$ is discussed in more detail in Appendix~\ref{implementation}. Notice finally that the change-point statistics $S_{n,k}$ and $U_{n,k}$ (resp.\ $T_{n,k}$ and $V_{n,k}$) coincide when $d=1$ since $\Smc_1^+ = \{1\}$. 

Let us now explain how approximate $p$-values for these statistics can be computed using the multiplier processes $\widehat{\D}_n$ and $\widecheck{\D}_n$. For the sake of brevity, we present the approach and state the key results only for $S_{n,\vee}$ as the cases of the other seven global statistics are similar. 

Let $N$ be a large integer and let $\xi_i^{(j)}$, $i \in \{1,\dots,n\}$, $j \in \{1,\dots,N\}$, be i.i.d.\ random variables with mean 0 and variance 1 satisfying $\int_0^\infty \{ \Pr(|\xi_i^{(j)}| > x) \}^{1/2} \dd x < \infty$, and independent of $X_1,\dots,X_n$. Also, for any $j \in \{1,\dots,N\}$, let $\widehat{\D}_n^{(j)}$ (resp.\ $\widecheck{\D}_n^{(j)}$) denote the version of $\widehat{\D}_n$ (resp.\ $\widecheck{\D}_n$) computed from $\xi_1^{(j)},\dots,\xi_n^{(j)}$. Moreover, for any $j \in \{1,\dots,N\}$, let
$$
\widehat{S}_{n,\vee}^{(j)} = \sup_{s \in [0,1]} \int_{\R^d} \left\{ \widehat{\D}_n^{(j)} \left( s,x \right) \right\}^2 \dd F_n(x)  \qquad \mbox{and} \qquad \widecheck{S}_{n,\vee}^{(j)} = \sup_{s \in [0,1]}  \int_{\R^d} \left\{ \widecheck{\D}_n^{(j)} \left( s,x \right) \right\}^2 \dd F_n(x).
$$
The following result is then essentially a corollary of Theorem~\ref{thm_H0}.

\begin{prop}
\label{prop_S_T_H0}
Under $H_0$, 
$$
\left( S_{n,\vee},\widehat{S}_{n,\vee}^{(1)},\dots,\widehat{S}_{n,\vee}^{(N)},\widecheck{S}_{n,\vee}^{(1)},\dots,\widecheck{S}_{n,\vee}^{(N)} \right) \leadsto \left( S_{\vee},S^{(1)}_{\vee},\dots,S^{(N)}_{\vee},S^{(1)}_{\vee},\dots,S^{(N)}_{\vee} \right)
$$
in $[0,\infty)^{(2N+1)}$, where
$$
S_{\vee} = \sup_{s \in [0,1]} \int_{\R^d} \{ \D_{P_0}(s,x) \}^2 \dd F_0(x) 
$$
is the weak limit of $S_{n,\vee}$, $F_0$ is the c.d.f.\ corresponding to $P_0$, and $S^{(1)}_{\vee},\dots,S^{(N)}_{\vee}$ are independent copies of $S_{\vee}$.
\end{prop}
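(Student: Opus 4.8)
The plan is to derive Proposition~\ref{prop_S_T_H0} from Theorem~\ref{thm_H0} in three steps: first extend that theorem to the joint convergence of $\D_n$ together with the $N$ pairs of multiplier processes; then show that integrating against $F_n$ rather than $F_0$ is asymptotically negligible; and finally apply the continuous mapping theorem to a single Lipschitz functional, reading off the dependence structure of the limit.

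\emph{Step 1: joint convergence of the $2N+1$ processes.} Stack the $N$ independent multiplier sequences into i.i.d.\ $\R^N$-valued vectors $(\xi_i^{(1)},\dots,\xi_i^{(N)})$, $i\in\{1,\dots,n\}$, which have mean zero, identity covariance matrix, and are independent of $X_1,\dots,X_n$. With this reformulation the proof of Theorem~\ref{multiplier}, and hence that of Theorem~\ref{thm_H0}, goes through with no essential change and yields
$$
\bigl(\D_n,\widehat{\D}_n^{(1)},\widecheck{\D}_n^{(1)},\dots,\widehat{\D}_n^{(N)},\widecheck{\D}_n^{(N)}\bigr)\leadsto\bigl(\D_{P_0},\D_{P_0}^{(1)},\D_{P_0}^{(1)},\dots,\D_{P_0}^{(N)},\D_{P_0}^{(N)}\bigr)
$$
in $\{\ell^\infty([0,1]\times\OO)\}^{2N+1}$, where $\D_{P_0}^{(1)},\dots,\D_{P_0}^{(N)}$ are independent copies of $\D_{P_0}$, mutually independent and independent of $\D_{P_0}$; the mutual independence across $j$ comes from the vanishing cross-covariances $\cov(\xi_i^{(j)},\xi_i^{(k)})$, $j\ne k$, and the independence of the $\D_{P_0}^{(j)}$ from $\D_{P_0}$ from the independence of the multipliers and the data. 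Crucially, for each fixed $j$ the versions $\widehat{\D}_n^{(j)}$ and $\widecheck{\D}_n^{(j)}$ share the \emph{same} limit $\D_{P_0}^{(j)}$, exactly as in Theorem~\ref{thm_H0}; this is why $S^{(j)}_\vee$ appears twice in the conclusion. The hypotheses of Theorem~\ref{thm_H0} hold here, $\OO$ being $P_0$-Donsker with constant envelope $1$.

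\emph{Step 2: replacing $F_n$ by $F_0$.} Let $\Phi(h)=\sup_{s\in[0,1]}\int_{\R^d}\{h(s,f_x)\}^2\,\dd P_0(x)$ for $h\in\ell^\infty([0,1]\times\OO)$. From $|\Phi(h)-\Phi(h')|\le\|h-h'\|_\infty(\|h\|_\infty+\|h'\|_\infty)$ the functional $\Phi$ is continuous. Since $S_{n,\vee}$, $\widehat{S}_{n,\vee}^{(j)}$ and $\widecheck{S}_{n,\vee}^{(j)}$ integrate against $F_n$ and not $F_0$, the heart of the argument is the claim
$$
\sup_{s\in[0,1]}\Bigl|(\P_n-P_0)\bigl(\{\D_n(s,f_\cdot)\}^2\bigr)\Bigr|\;\p\;0
$$
and its analogues with $\widehat{\D}_n^{(j)}$ and $\widecheck{\D}_n^{(j)}$ replacing $\D_n$; these say exactly that $S_{n,\vee}-\Phi(\D_n)\p0$, $\widehat{S}_{n,\vee}^{(j)}-\Phi(\widehat{\D}_n^{(j)})\p0$ and $\widecheck{S}_{n,\vee}^{(j)}-\Phi(\widecheck{\D}_n^{(j)})\p0$. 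To prove the claim for $\D_n$, fix $\eta>0$. By Theorem~\ref{thm_H0}, $\D_n$ is asymptotically tight in $\ell^\infty([0,1]\times\OO)$, hence asymptotically uniformly bounded and asymptotically equicontinuous for a semimetric making $[0,1]\times\OO$ totally bounded; restricting to sections at fixed $s$, this shows that the family $\{x\mapsto\{\D_n(s,f_x)\}^2:s\in[0,1]\}$ is, with probability at least $1-\eta$ for all large $n$, contained in a fixed uniformly bounded subclass of functions on $\R^d$ that is equicontinuous for a totally bounded semimetric on $\OO$ --- hence, by an Arzel\`a--Ascoli argument, $\|\cdot\|_\infty$-compact, hence $P_0$-Glivenko--Cantelli. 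Since $X_1,\dots,X_n$ are i.i.d.\ $P_0$ under $H_0$, the Glivenko--Cantelli property of that subclass forces $\sup_{s}|(\P_n-P_0)(\{\D_n(s,f_\cdot)\}^2)|\to0$ on that event; letting $\eta\downarrow0$ proves the claim. The argument for $\widehat{\D}_n^{(j)}$ and $\widecheck{\D}_n^{(j)}$ is identical, their asymptotic tightness again coming from Theorem~\ref{thm_H0}.

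\emph{Step 3: conclusion.} By Step~1 and the continuous mapping theorem applied coordinatewise to the continuous $\Phi$,
$$
\bigl(\Phi(\D_n),\Phi(\widehat{\D}_n^{(1)}),\Phi(\widecheck{\D}_n^{(1)}),\dots\bigr)\leadsto\bigl(\Phi(\D_{P_0}),\Phi(\D_{P_0}^{(1)}),\Phi(\D_{P_0}^{(1)}),\dots\bigr),
$$
where $\Phi(\D_{P_0})=\sup_{s\in[0,1]}\int_{\R^d}\{\D_{P_0}(s,x)\}^2\,\dd F_0(x)=S_\vee$, the $S^{(j)}_\vee:=\Phi(\D_{P_0}^{(j)})$ are independent copies of $S_\vee$, and the ``hat'' and ``check'' coordinates coincide for each $j$. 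Combining this with Step~2 --- a perturbation tending to zero in outer probability does not affect a weak limit (Slutsky) --- yields the asserted joint convergence in $[0,\infty)^{2N+1}$. The only genuinely substantial step is Step~2; everything else is bookkeeping built on Theorems~\ref{multiplier} and~\ref{thm_H0} together with standard weak-convergence tools (continuous mapping, Slutsky, Glivenko--Cantelli, Arzel\`a--Ascoli).
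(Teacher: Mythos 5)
Your argument is correct in substance, and its Steps 1 and 3 coincide with the paper's bookkeeping: the paper likewise takes the joint convergence of $(\D_n,\widehat{\D}_n^{(1)},\dots,\widecheck{\D}_n^{(N)})$ to $(\D_{P_0},\D_{P_0}^{(1)},\dots,\D_{P_0}^{(N)})$ for granted as a consequence of Theorem~\ref{thm_H0} extended to $N$ independent multiplier sequences, and it also finishes with the continuous mapping theorem. Where you genuinely diverge is the handling of the random integrator $F_n$. The paper adjoins $F_n$ as an extra coordinate, so that $(\D_n,\dots,F_n)\leadsto(\D_{P_0},\dots,F_0)$ jointly, and then applies the continuous mapping theorem through Lemma~\ref{phi}, which asserts that $(a,b)\mapsto\int_{\R^d}a\,\dd b$ is continuous at pairs whose first argument is continuous; this uses the almost-sure continuity in $x$ of the limiting paths $x\mapsto\D_{P_0}(s,x)$. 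You instead freeze the integrator at $P_0$ and show that the substitution error $\sup_{s\in[0,1]}\left|(\P_n-P_0)\{\D_n(s,f_\cdot)\}^2\right|$ vanishes in outer probability via asymptotic tightness, Arzel\`a--Ascoli and a Glivenko--Cantelli argument over a fixed class. Each route buys something: yours avoids having to establish continuity of the limit in $x$ (a point the paper simply asserts) and makes explicit why the dependence of $\D_n$ on the same data as $\P_n$ is harmless (you bound by a supremum over a nonrandom class); the paper's route avoids precisely your most delicate step, since uniform convergence of $F_n$ to $F_0$ over orthants is classical and all remaining analysis is pushed into the deterministic Lemma~\ref{phi}. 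If you write your version up carefully, two technical points need attention. First, asymptotic tightness only guarantees that, with probability at least $1-\eta$ for large $n$, $\D_n$ lies in a $\delta$-enlargement $K^\delta$ of a compact set $K$, not in a fixed equicontinuous class itself; this is easily repaired by writing $g=k+r$ with $k\in K$ and $\sup_x|r(x)|\le\delta$, so that $|(\P_n-P_0)g|\le|(\P_n-P_0)k|+2\delta$, applying the Glivenko--Cantelli property to $K$, and letting $\delta\downarrow0$ before $\eta\downarrow0$. Second, your functional $\Phi$ is not defined on all of $\ell^\infty([0,1]\times\OO)$, since for an arbitrary bounded $h$ the section $x\mapsto\{h(s,f_x)\}^2$ need not be Borel measurable; restrict $\Phi$ to elements with measurable sections (as the paper does by defining $\phi$ on $A\times B$) or work with outer integrals. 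Neither point changes the structure or validity of your argument.
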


The previous proposition suggests interpreting the $\widehat{S}_{n,\vee}^{(j)}$ (resp.\ the $\widecheck{S}_{n,\vee}^{(j)}$) under the null hypothesis as $N$ ``almost'' independent copies of $S_{n,\vee}$ and thus computing an approximate $p$-value for $S_{n,\vee}$ as 
\begin{equation}
\label{pvalues}
\frac{1}{N} \sum_{j=1}^N \1 \left( \widehat{S}_{n,\vee}^{(j)} \geq S_{n,\vee} \right) \qquad \mbox{or as} \qquad \frac{1}{N} \sum_{j=1}^N \1 \left( \widecheck{S}_{n,\vee}^{(j)} \geq S_{n,\vee}   \right).
\end{equation}

\begin{prop}
\label{prop_S_T_H1}
Assume that $H_1$ holds with $k^\star = \ip{nt}$ for some $t \in (0,1)$. Then, $S_{n,\vee} \p + \infty$ while, for any $j \in \{1,\dots,N\}$, $\widehat{S}_{n,\vee}^{(j)}$ and $\widecheck{S}_{n,\vee}^{(j)}$ are bounded in outer probability. 
\end{prop}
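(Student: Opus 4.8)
The plan is to derive both assertions from Theorem~\ref{thm_H1} applied with $\FF=\OO$, which is a Vapnik-\v{C}hervonenkis class and hence $P_1$- and $P_2$-Donsker with the constant envelope $F_e\equiv 1$ (so $P_1F_e^2=P_2F_e^2=1<\infty$). Throughout, write $g(y)=(P_1f_y-P_2f_y)^2\in[0,1]$, a fixed bounded measurable function, and note that $K_t(t,f_x)=t(1-t)(P_1f_x-P_2f_x)$, so that $|K_t(t,f_x)|\le 1$ and $\int_{\R^d}K_t(t,x)^2\,\dd F_n(x)=t^2(1-t)^2\,\P_ng$.

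For the divergence, I would bound $S_{n,\vee}$ below by its $k^\star$-th term. Since $k^\star=\ip{nt}$, the map $\D_n(t,\cdot)$ coincides with $\D_n(k^\star/n,\cdot)$, so $S_{n,\vee}=\max_{1\le k\le n-1}S_{n,k}\ge S_{n,k^\star}=n\int_{\R^d}\{n^{-1/2}\D_n(t,x)\}^2\,\dd F_n(x)$. By Theorem~\ref{thm_H1}(i),
\[
\vep_n:=\sup_{x\in\R^d}\big|n^{-1/2}\D_n(t,x)-K_t(t,x)\big|\ \le\ \sup_{s\in[0,1]}\big\|n^{-1/2}\D_n(s,\cdot)-K_t(s,\cdot)\big\|_\OO\ \p\ 0.
\]
Expanding the square, using $|K_t(t,\cdot)|\le 1$ and $\int\dd F_n=1$, and discarding a nonnegative term yields $\int\{n^{-1/2}\D_n(t,x)\}^2\,\dd F_n(x)\ge t^2(1-t)^2\,\P_ng-2\vep_n$. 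Under $H_1$, writing $\P_ng$ as the convex combination (with weights $k^\star/n\to t$ and $(n-k^\star)/n\to 1-t$) of the block averages of $g$ over $X_1,\dots,X_{k^\star}$ (i.i.d.\ $P_1$) and over $X_{k^\star+1},\dots,X_n$ (i.i.d.\ $P_2$) and invoking the strong law of large numbers on each block gives $\P_ng\as tP_1g+(1-t)P_2g=:\kappa$. Provided $\kappa>0$ (see the last paragraph), the lower bound is at least $\frac{1}{2}t^2(1-t)^2\kappa>0$ with probability tending to one, hence $S_{n,\vee}\ge \frac{1}{2}n\,t^2(1-t)^2\kappa$ with probability tending to one, and so $S_{n,\vee}\p+\infty$.

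For the resampled statistics, fix $j\in\{1,\dots,N\}$. Since $\int\dd F_n=1$, we have $\widehat S_{n,\vee}^{(j)}\le\big(\sup_{s\in[0,1]}\sup_{x\in\R^d}|\widehat\D_n^{(j)}(s,x)|\big)^2$ and $\widecheck S_{n,\vee}^{(j)}\le\big(\sup_{s\in[0,1]}\sup_{x\in\R^d}|\widecheck\D_n^{(j)}(s,x)|\big)^2$. Because $\widehat\D_n^{(j)}$ and $\widecheck\D_n^{(j)}$ have the same laws as $\widehat\D_n$ and $\widecheck\D_n$ (the $\xi_i^{(j)}$ being i.i.d., satisfying the required tail condition, and independent of the data), Theorem~\ref{thm_H1}(ii) shows that $\sup_s\sup_x|\widehat\D_n^{(j)}(s,x)|$ is bounded in outer probability, while Theorem~\ref{thm_H1}(iii) together with the continuous mapping theorem applied to the ($1$-Lipschitz) supremum-norm functional on $\ell^\infty([0,1]\times\OO)$ shows that $\sup_s\sup_x|\widecheck\D_n^{(j)}(s,x)|$ converges weakly in $\R$ and is therefore bounded in outer probability. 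Since squaring preserves boundedness in outer probability, both $\widehat S_{n,\vee}^{(j)}$ and $\widecheck S_{n,\vee}^{(j)}$ are bounded in outer probability.

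The only step that is not mechanical is the strict positivity of $\kappa$, which is where I expect whatever difficulty there is to lie. One has $\kappa=0$ precisely when $P_1f_x=P_2f_x$ for $(tP_1+(1-t)P_2)$-almost every $x$, equivalently when the c.d.f.s of $P_1$ and $P_2$ agree $P_1$- and $P_2$-almost everywhere; since the lower-left orthants form a determining class and $P_1,P_2\ll tP_1+(1-t)P_2$, a short argument based on the right-continuity (continuity from above) of multivariate c.d.f.s propagates this almost-everywhere equality to every orthant, whence $P_1=P_2$, contradicting $H_1$. Everything else above — the one-term reduction, the elementary bound on the integral, the law-of-large-numbers identification of $\kappa$, and the tightness of the multiplier statistics — is routine once Theorem~\ref{thm_H1} is available.
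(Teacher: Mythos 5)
Your proof is essentially sound and reaches the two conclusions by a somewhat more economical route than the paper. For the divergence, the paper does not stop at a single change-point index: it proves the stronger statement that $S_{n,\vee}/n$ converges in outer probability to $\sup_{s\in[0,1]}\int_{\R^d}\{K_t(s,x)\}^2\,\dd F_t(x)$, via a two-term decomposition (replace $\{\D_n\}^2/n$ by $K_t^2$ under $F_n$ using Theorem~\ref{thm_H1}~(i) and the continuous mapping theorem, then replace $F_n$ by $F_t$ using the law of large numbers applied to $(F_1-F_2)^2$ on the two blocks); your one-term lower bound at $k=k^\star$ together with $\P_n g\to\kappa$ delivers the same divergence with less work, at the cost of not identifying the limit of $S_{n,\vee}/n$. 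For the multiplier statistics, your treatment of $\widehat S_{n,\vee}^{(j)}$ is identical to the paper's (bound by the squared sup and invoke Theorem~\ref{thm_H1}~(ii)); for $\widecheck S_{n,\vee}^{(j)}$ the paper instead uses the joint weak convergence of $(\widecheck{\D}_n^{(j)},F_n)$ together with Lemma~\ref{phi} (continuity of $(a,b)\mapsto\int a\,\dd b$) and the continuous mapping theorem, whereas you simply dominate by the squared sup-norm and use Theorem~\ref{thm_H1}~(iii); your bound is simpler and avoids Lemma~\ref{phi} altogether.

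The one step you rightly single out, $\kappa=t P_1(F_1-F_2)^2+(1-t)P_2(F_1-F_2)^2>0$, is exactly the step the paper also uses (its limit is $t^2(1-t)^2\kappa$) and asserts without proof from ``$F_1$ and $F_2$ distinct'', so on this point your argument is no less complete than the paper's. But your sketched justification is not a proof: a set of full $(tP_1+(1-t)P_2)$-measure need not be dense, so continuity from above cannot by itself propagate the almost-everywhere equality $F_1=F_2$ to all of $\R^d$. In fact, for $d\ge 2$ the implication ``$F_1=F_2$ $(P_1+P_2)$-a.e.\ $\Rightarrow P_1=P_2$'' fails: take two distinct atomless laws concentrated on the antidiagonal $\{(u,1-u):u\in[0,1]\}$; both c.d.f.s vanish identically on that segment, so $\kappa=0$ although $P_1\ne P_2$ (and for such a pair the statistic $S_{n,\vee}$ does not diverge). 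So positivity of $\kappa$ is a genuine restriction on the alternatives detectable at rate $n$, not a formality; in dimension one it does hold (an argument on the signed measure $P_1-P_2$, treating atoms and the atomless part separately, gives it), but the multivariate case requires either this additional hypothesis or a more careful statement. Apart from this caveat — which applies equally to the paper's own proof — your argument is correct.
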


A consequence of the previous proposition is that, under $H_1$, the approximate $p$-values for $S_{n,\vee}$ will tend to zero in outer probability. As mentioned earlier, results analogous to Propositions~\ref{prop_S_T_H0} and~\ref{prop_S_T_H1} can be obtained for $S_{n,+}$, $T_{n,\vee}$, $T_{n,+}$, $U_{n,\vee}$, $U_{n,+}$, $V_{n,\vee}$ and $V_{n,+}$.

\section{Monte Carlo experiments}
\label{sims}

Large-scale Monte Carlo experiments were carried out in order to study the finite-sample performance of the tests defined in the previous section. Let $Q_n$ be one of $S_{n,\vee}$, $S_{n,+}$, $T_{n,\vee}$, $T_{n,+}$, $U_{n,\vee}$, $U_{n,+}$, $V_{n,\vee}$ and $V_{n,+}$. In the rest of the paper, the test based on $Q_n$ will be referred to as {\em the test based on $\widehat{Q}_n$ (resp.\ $\widecheck{Q}_n$)} when its approximate $p$-value is computed using the multiplier processes $\widehat{\D}_n^{(j)}$ (resp.\ $\widecheck{\D}_n^{(j)}$). 

To compare the power of the aforementioned tests, several univariate and multivariate scenarios were considered and 1000 samples of size $n$ were generated under each scenario for $n \in \{50,100,200\}$. Recall that the c.d.f.\ corresponding to $P_0$ in $H_0$ defined in~(\ref{H0}) is denoted by $F_0$. Similarly, the distinct c.d.f.s corresponding to $P_1$ and $P_2$ in $H_1$ defined in~(\ref{H1}) will be denoted by $F_1$ and $F_2$, respectively, as we continue. In all scenarios, the multipliers appearing in the $\widehat{\D}_n^{(j)}$ and $\widecheck{\D}_n^{(j)}$ were taken from the standard normal distribution. All approximate $p$-values were computed from $N=1000$ multiplier realizations and the tests were carried out at the 5\% level of significance. 

\begin{table}[ht]
\begin{center}
\caption{Percentage of rejection of $H_0$ in the univariate case computed from 1000 random samples of size $n$ generated under $H_0$ defined in~(\ref{H0}) where $F_0$ is the c.d.f. of the standard normal distribution.}
\label{H0univ}
\begin{tabular}{rrrrrrrrrrrrr}
  \hline
  $n$ & $\widehat{S}_{n,\vee}$ & $\widecheck{S}_{n,\vee}$ & $S_{n,\vee}^*$ & $\widehat{S}_{n,+}$ & $\widecheck{S}_{n,+}$ & $S_{n,+}^*$ & $\widehat{T}_{n,\vee}$ & $\widecheck{T}_{n,\vee}$ & $T_{n,\vee}^*$ & $\widehat{T}_{n,+}$ & $\widecheck{T}_{n,+}$ & $T_{n,+}^*$ \\ \hline
50 & 7.2 & 5.7 & 5.7 & 7.7 & 5.1 & 5.9 & 6.7 & 5.8 & 5.7 & 8.4 & 5.2 & 4.3 \\ 
  100 & 6.5 & 5.5 & 6.2 & 6.0 & 4.9 & 6.1 & 7.1 & 6.6 & 6.5 & 8.1 & 6.2 & 6.2 \\ 
  200 & 5.9 & 4.8 & 4.3 & 5.9 & 4.6 & 5.5 & 5.4 & 4.5 & 4.5 & 7.5 & 5.4 & 4.3 \\ 
   \hline
\end{tabular}
\end{center}
\end{table}

From the previous section, it is easy to verify that, in the univariate case, the change-point statistics $U_{n,k}$ (resp.\ $V_{n,k}$) coincide with the $S_{n,k}$ (resp.\ $T_{n,k}$) since $\Smc_1^+ = \{1\}$, and that the $S_{n,k}$ and the $T_{n,k}$ are solely based on ranks. From the latter fact, it follows that, under $H_0$ and the assumption of continuity of $F_0$, independent realizations of each of the four global statistics based on the $S_{n,k}$ or the $T_{n,k}$ can be obtained by computing these global statistics from independent samples of size $n$ generated from the standard uniform distribution. A natural way of computing an approximate $p$-value for each of the four global statistics then consists of proceeding analogously to~(\ref{pvalues}) using $N$ independent realizations. As we continue, the resulting four univariate tests will be referred to as {\em the tests based on $S^*_{n,\vee}$, $S^*_{n,+}$, $T^*_{n,\vee}$ and $T_{n,+}^*$}. 

Table~\ref{H0univ} gives rejection percentages of $H_0$ in dimension one for all the aforementioned versions of the tests when data are generated under $H_0$. As can be seen, the tests whose approximate $p$-value is computed using the processes $\widehat{\D}_n^{(j)}$ seem to be too liberal (at least for $n \in \{50,100\}$), and more liberal than their version computed from the $\widecheck{\D}_n^{(j)}$. Nevertheless, as expected, the empirical levels of the multiplier tests improve as $n$ increases in the sense that they become closer to the 5\% nominal level. Note that the tests based on $S^*_{n,\vee}$, $S^*_{n,+}$, $T^*_{n,\vee}$ and $T_{n,+}^*$ provide a sort of benchmark as, by construction, they should hold their level well for any $n$ provided $N$ is taken sufficiently large. 

\begin{sidewaystable}[ht]
\begin{center}
\caption{Percentage of rejection of $H_0$ in the univariate case computed from 1000 samples of size $n$ generated under $H_1$ defined in~(\ref{H1}),  where $k^\star = \ip{nt}$, and $F_1$ and $F_2$ are the c.d.f.s of the distributions given in the first two columns.}
\label{H1univ}
\begin{tabular}{llrrrrrrrrrrrrrr}
  \hline
  $F_1$ & $F_2$ & $n$ & $t$ & $\widehat{S}_{n,\vee}$ & $\widecheck{S}_{n,\vee}$ & $S_{n,\vee}^*$ & $\widehat{S}_{n,+}$ & $\widecheck{S}_{n,+}$ & $S_{n,+}^*$ & $\widehat{T}_{n,\vee}$ & $\widecheck{T}_{n,\vee}$ & $T_{n,\vee}^*$ & $\widehat{T}_{n,+}$ & $\widecheck{T}_{n,+}$ & $T_{n,+}^*$ \\ \hline
N(0,1) & N(0.5,1) & 50 & 0.10 & 9.1 & 7.1 & 6.8 & 10.1 & 7.1 & 8.2 & 9.0 & 7.3 & 6.9 & 14.2 & 8.3 & 8.6 \\ 
  N(0,1) & N(0.5,1) & 50 & 0.25 & 18.4 & 15.9 & 13.4 & 19.1 & 14.4 & 14.4 & 17.0 & 13.6 & 8.9 & 23.0 & 13.6 & 12.0 \\ 
  N(0,1) & N(0.5,1) & 50 & 0.50 & 34.0 & 30.5 & 31.6 & 34.1 & 29.0 & 32.1 & 32.6 & 29.5 & 25.6 & 35.4 & 26.9 & 24.9 \\ 
  N(0,1) & N(0.5,1) & 100 & 0.10 & 9.7 & 8.2 & 6.5 & 10.1 & 8.5 & 7.3 & 9.7 & 8.7 & 6.7 & 11.6 & 9.4 & 7.3 \\ 
  N(0,1) & N(0.5,1) & 100 & 0.25 & 36.9 & 34.0 & 33.8 & 36.2 & 33.3 & 31.7 & 30.4 & 28.6 & 25.1 & 37.2 & 31.4 & 29.3 \\ 
  N(0,1) & N(0.5,1) & 100 & 0.50 & 58.7 & 55.9 & 54.1 & 56.7 & 53.9 & 56.0 & 52.1 & 48.6 & 42.1 & 53.6 & 49.3 & 45.5 \\ 
  N(0,1) & N(0.5,1) & 200 & 0.10 & 15.9 & 15.1 & 14.0 & 18.6 & 17.3 & 16.8 & 15.3 & 14.6 & 11.1 & 22.1 & 19.5 & 18.0 \\ 
  N(0,1) & N(0.5,1) & 200 & 0.25 & 65.2 & 64.1 & 63.9 & 65.8 & 64.3 & 62.5 & 56.7 & 55.5 & 54.8 & 61.6 & 57.2 & 52.2 \\ 
  N(0,1) & N(0.5,1) & 200 & 0.50 & 87.1 & 86.4 & 86.8 & 86.1 & 85.3 & 85.0 & 81.2 & 80.6 & 79.1 & 81.2 & 79.5 & 76.2 \\ 
  N(0,1) & N(0,2) & 50 & 0.10 & 6.8 & 5.6 & 5.9 & 8.1 & 5.4 & 4.9 & 8.1 & 6.0 & 5.1 & 11.9 & 7.0 & 6.3 \\ 
  N(0,1) & N(0,2) & 50 & 0.25 & 8.9 & 6.4 & 7.1 & 11.7 & 9.4 & 6.9 & 11.3 & 10.0 & 8.1 & 17.0 & 10.9 & 8.4 \\ 
  N(0,1) & N(0,2) & 50 & 0.50 & 12.9 & 10.3 & 9.9 & 18.8 & 13.5 & 14.1 & 19.6 & 16.6 & 12.6 & 27.8 & 18.4 & 17.4 \\ 
  N(0,1) & N(0,2) & 100 & 0.10 & 7.1 & 6.4 & 6.6 & 8.1 & 6.8 & 6.3 & 8.6 & 7.7 & 6.0 & 10.7 & 8.0 & 6.0 \\ 
  N(0,1) & N(0,2) & 100 & 0.25 & 8.8 & 7.5 & 8.1 & 17.5 & 14.3 & 15.4 & 14.2 & 13.2 & 12.1 & 25.3 & 19.2 & 18.8 \\ 
  N(0,1) & N(0,2) & 100 & 0.50 & 23.1 & 20.8 & 24.9 & 38.4 & 34.2 & 39.1 & 34.5 & 31.9 & 27.5 & 46.6 & 38.9 & 37.7 \\ 
  N(0,1) & N(0,2) & 200 & 0.10 & 6.4 & 5.9 & 5.9 & 9.1 & 8.2 & 6.9 & 9.0 & 8.7 & 6.7 & 14.5 & 12.2 & 8.6 \\ 
  N(0,1) & N(0,2) & 200 & 0.25 & 21.0 & 19.6 & 21.2 & 50.2 & 47.8 & 41.3 & 33.6 & 32.6 & 32.5 & 54.7 & 49.3 & 43.8 \\ 
  N(0,1) & N(0,2) & 200 & 0.50 & 64.8 & 64.0 & 65.1 & 82.8 & 81.7 & 82.3 & 72.1 & 71.0 & 65.7 & 83.7 & 82.5 & 79.1 \\ 
  E(1) & E(0.5) & 50 & 0.10 & 8.1 & 6.8 & 6.0 & 9.3 & 7.2 & 7.3 & 8.9 & 7.4 & 5.8 & 14.5 & 8.7 & 7.8 \\ 
  E(1) & E(0.5) & 50 & 0.25 & 28.0 & 25.3 & 22.1 & 30.7 & 25.5 & 22.6 & 25.5 & 22.4 & 19.6 & 33.8 & 24.2 & 22.3 \\ 
  E(1) & E(0.5) & 50 & 0.50 & 50.5 & 46.2 & 46.3 & 49.6 & 43.8 & 48.8 & 45.0 & 40.8 & 40.7 & 48.6 & 40.7 & 42.9 \\ 
  E(1) & E(0.5) & 100 & 0.10 & 12.7 & 11.0 & 7.9 & 14.3 & 11.7 & 11.4 & 12.8 & 11.6 & 11.1 & 16.7 & 13.1 & 10.9 \\ 
  E(1) & E(0.5) & 100 & 0.25 & 51.9 & 50.0 & 50.8 & 52.2 & 48.4 & 47.4 & 45.2 & 42.9 & 41.3 & 49.9 & 44.9 & 45.3 \\ 
  E(1) & E(0.5) & 100 & 0.50 & 77.4 & 76.1 & 74.6 & 77.3 & 74.8 & 72.9 & 72.4 & 70.4 & 70.7 & 73.8 & 69.4 & 67.9 \\ 
  E(1) & E(0.5) & 200 & 0.10 & 19.9 & 19.3 & 20.2 & 24.9 & 23.2 & 23.8 & 16.8 & 16.3 & 13.9 & 25.4 & 23.3 & 19.4 \\ 
  E(1) & E(0.5) & 200 & 0.25 & 87.0 & 86.2 & 85.5 & 84.4 & 83.1 & 81.4 & 79.0 & 78.0 & 76.2 & 80.8 & 77.3 & 77.9 \\ 
  E(1) & E(0.5) & 200 & 0.50 & 96.6 & 96.3 & 95.7 & 95.4 & 94.9 & 95.7 & 94.9 & 95.0 & 94.7 & 93.2 & 92.7 & 91.7 \\ 
   \hline
\end{tabular}
\end{center}
\end{sidewaystable}

Rejection percentages of $H_0$ in the univariate case when data are generated under $H_1$ are reported in Table~\ref{H1univ}. Three scenarios are considered: $F_1$ and $F_2$ are the c.d.f.s of the $N(0,1)$ and the $N(0.5,1)$ distributions, respectively; $F_1$ and $F_2$ are the c.d.f.s of the $N(0,1)$ and the $N(0,2)$ distributions, respectively; and $F_1$ and $F_2$ are the c.d.f.s of the exponential $E(1)$ and $E(0.5)$ distributions, respectively. Notice that the parameter $t$ taking its values in $\{0.1,0.25,0.5\}$ determines the position of the change-point in $H_1$ as $k^\star = \ip{nt}$. As one can see, the tests based on the processes $\widehat{\D}_n^{(j)}$ are consistently slightly more powerful than those based on the $\widecheck{\D}_n^{(j)}$, while the rejections rates of the latter are, overall, comparable to those of the tests based on simulation from the standard uniform distribution. This merely appears to be due to the fact that the tests based on the processes $\widehat{\D}_n^{(j)}$ are slightly too liberal for the sample sizes under consideration. Notice that the differences in power decrease as $n$ increases, as expected. Among the tests based on the $\widecheck{\D}_n^{(j)}$, the one based on $\widecheck{S}_{n,\vee}$ seems, overall, to be the best choice for detecting changes in mean, while the test based on $\widecheck{T}_{n,+}$ seems, overall, to be the best choice for detecting changes in variance. The former seems also to be the best choice, overall, when data are generated under the third scenario involving exponential distributions. If one is willing to make continuity assumptions on the underlying distributions, the tests based on $S^*_{n,\vee}$ and $T^*_{n,+}$ are equivalently good candidates. Clearly, there exists more powerful test for change-point detection if it is assumed that only a change in mean or variance can occur \citep[see e.g.][]{BroDar93,CsoHor97}.

In dimension two and three, we considered multivariate distributions constructed from \citet{Skl59}'s representation theorem. The latter result states that any multivariate c.d.f.\ $F:\R^d \to [0,1]$ whose marginal c.d.f.s $F^{[1]},\dots,F^{[d]}$ are continuous can be expressed in terms of a unique $d$-dimensional copula $C$ as
$$
F(x) = C\{F^{[1]}(x_1),\dots,F^{[d]}(x_d)\}, \qquad x \in \R^d.
$$

\begin{sidewaystable}[ht]
\begin{center}
\caption{Percentage of rejection of $H_0$ computed from 1000 random samples of size $n$ generated under $H_0$ defined in~(\ref{H0}), where $F_0$ is a bivariate c.d.f.\ whose univariate margins $F^{[1]}_0$ and $F^{[2]}_0$ are either both standard normal (N) or both standard exponential (E), and whose copula is either the Clayton (Cl) or the Gumbel--Hougaard (GH) with a Kendall's tau of $\tau$. The parameter $m$ used to uniformly discretize $\Smc_2^+$ was set to 8 (see Appendix~\ref{implementation} for more details).}
\label{H0biv}
\begin{tabular}{lrrrrrrrrrrrrrrrrrr}
  \hline
  \multicolumn{3}{c}{} & \multicolumn{8}{c}{Cl} & \multicolumn{8}{c}{GH} \\ \cmidrule(lr){4-11} \cmidrule(lr){12-19} $F^{[i]}_0$ & $n$ & $\tau$ & $\widecheck{S}_{n,\vee}$ & $\widecheck{S}_{n,+}$ & $\widecheck{T}_{n,\vee}$ & $\widecheck{T}_{n,+}$ & $\widecheck{U}_{n,\vee}$ & $\widecheck{U}_{n,+}$ & $\widecheck{V}_{n,\vee}$ & $\widecheck{V}_{n,+}$ & $\widecheck{S}_{n,\vee}$ & $\widecheck{S}_{n,+}$ & $\widecheck{T}_{n,\vee}$ & $\widecheck{T}_{n,+}$ & $\widecheck{U}_{n,\vee}$ & $\widecheck{U}_{n,+}$ & $\widecheck{V}_{n,\vee}$ & $\widecheck{V}_{n,+}$ \\ \hline
N & 50 & 0.00 & 3.8 & 3.7 & 4.7 & 4.8 & 5.5 & 4.4 & 5.9 & 5.9 & 4.2 & 3.8 & 5.4 & 4.0 & 4.7 & 3.5 & 5.8 & 5.7 \\ 
  N & 50 & 0.25 & 3.2 & 2.6 & 4.8 & 4.0 & 5.5 & 4.1 & 6.3 & 5.7 & 4.9 & 4.4 & 4.5 & 4.2 & 5.2 & 4.4 & 6.2 & 5.4 \\ 
  N & 50 & 0.50 & 5.3 & 4.9 & 6.3 & 6.2 & 4.9 & 4.4 & 5.1 & 5.4 & 4.7 & 4.4 & 5.8 & 5.3 & 4.9 & 4.3 & 5.3 & 5.2 \\ 
  N & 50 & 0.75 & 5.0 & 4.7 & 5.0 & 4.6 & 4.9 & 4.5 & 6.4 & 5.7 & 4.4 & 3.9 & 4.5 & 4.1 & 4.2 & 3.8 & 4.6 & 4.5 \\ 
  N & 100 & 0.00 & 4.9 & 4.9 & 5.5 & 6.2 & 5.6 & 4.7 & 6.5 & 6.6 & 4.0 & 4.5 & 5.2 & 4.7 & 3.5 & 3.4 & 4.0 & 4.2 \\ 
  N & 100 & 0.25 & 5.4 & 4.9 & 5.6 & 5.1 & 4.7 & 4.3 & 5.6 & 5.6 & 6.0 & 4.9 & 5.9 & 6.3 & 5.3 & 4.4 & 6.5 & 6.1 \\ 
  N & 100 & 0.50 & 5.1 & 4.2 & 5.1 & 4.9 & 4.1 & 4.5 & 6.0 & 6.3 & 4.7 & 4.2 & 4.3 & 3.9 & 3.8 & 4.2 & 5.7 & 5.4 \\ 
  N & 100 & 0.75 & 4.9 & 5.4 & 5.5 & 5.7 & 5.1 & 5.3 & 6.5 & 7.4 & 5.1 & 4.7 & 4.5 & 3.6 & 5.5 & 4.2 & 5.3 & 5.1 \\ 
  N & 200 & 0.00 & 5.5 & 5.4 & 5.5 & 6.0 & 5.1 & 6.0 & 5.7 & 7.4 & 4.8 & 4.2 & 5.6 & 4.6 & 4.0 & 3.7 & 5.3 & 4.7 \\ 
  N & 200 & 0.25 & 5.4 & 5.2 & 5.9 & 6.4 & 4.9 & 4.4 & 6.0 & 6.3 & 5.2 & 4.1 & 5.3 & 5.2 & 4.9 & 4.3 & 5.7 & 4.8 \\ 
  N & 200 & 0.50 & 5.8 & 4.8 & 4.9 & 4.1 & 5.2 & 4.4 & 4.9 & 4.9 & 5.1 & 6.0 & 5.5 & 6.0 & 6.0 & 6.0 & 5.8 & 7.1 \\ 
  N & 200 & 0.75 & 6.7 & 5.5 & 7.3 & 6.0 & 6.7 & 6.1 & 7.1 & 6.3 & 4.9 & 4.5 & 5.4 & 4.7 & 5.8 & 4.6 & 5.6 & 5.6 \\ 
  E & 50 & 0.00 & 4.3 & 4.8 & 6.0 & 6.0 & 4.0 & 3.9 & 6.1 & 6.1 & 3.8 & 3.9 & 3.8 & 3.6 & 3.6 & 3.2 & 5.8 & 5.7 \\ 
  E & 50 & 0.25 & 3.7 & 3.4 & 5.3 & 4.0 & 4.7 & 3.6 & 5.8 & 4.4 & 6.6 & 5.6 & 6.9 & 5.9 & 5.2 & 4.2 & 6.7 & 5.8 \\ 
  E & 50 & 0.50 & 3.7 & 3.4 & 4.8 & 5.0 & 3.3 & 3.4 & 6.5 & 6.2 & 4.8 & 4.7 & 5.9 & 5.1 & 4.7 & 3.6 & 6.9 & 5.4 \\ 
  E & 50 & 0.75 & 5.4 & 4.9 & 6.0 & 5.4 & 5.2 & 4.2 & 5.9 & 6.4 & 5.9 & 5.2 & 6.0 & 5.2 & 6.0 & 5.2 & 7.6 & 6.2 \\ 
  E & 100 & 0.00 & 4.9 & 4.6 & 5.3 & 4.6 & 4.6 & 4.8 & 4.7 & 5.1 & 5.3 & 4.5 & 5.2 & 4.6 & 5.5 & 4.6 & 5.5 & 5.7 \\ 
  E & 100 & 0.25 & 5.1 & 4.9 & 4.9 & 4.7 & 5.9 & 4.8 & 5.7 & 6.1 & 3.5 & 3.6 & 4.4 & 4.6 & 4.1 & 4.6 & 5.6 & 6.5 \\ 
  E & 100 & 0.50 & 5.4 & 4.6 & 5.3 & 4.5 & 5.3 & 4.4 & 6.3 & 5.1 & 5.2 & 4.4 & 4.0 & 3.8 & 4.9 & 3.8 & 5.5 & 5.9 \\ 
  E & 100 & 0.75 & 5.5 & 5.8 & 5.5 & 5.6 & 5.1 & 5.0 & 5.3 & 5.6 & 6.1 & 6.5 & 7.2 & 7.1 & 6.6 & 6.0 & 6.9 & 6.8 \\ 
  E & 200 & 0.00 & 4.9 & 4.4 & 5.0 & 4.9 & 5.2 & 4.6 & 5.1 & 5.4 & 5.1 & 4.4 & 4.6 & 4.9 & 4.5 & 4.7 & 5.5 & 5.6 \\ 
  E & 200 & 0.25 & 5.6 & 5.6 & 5.4 & 5.3 & 5.6 & 5.3 & 6.6 & 6.4 & 4.5 & 4.6 & 5.5 & 5.0 & 4.6 & 4.9 & 4.7 & 5.7 \\ 
  E & 200 & 0.50 & 5.5 & 4.8 & 6.4 & 5.3 & 5.1 & 4.8 & 5.8 & 5.5 & 6.7 & 6.2 & 6.3 & 6.0 & 6.1 & 6.2 & 5.6 & 6.5 \\ 
  E & 200 & 0.75 & 4.9 & 4.3 & 4.9 & 5.0 & 4.7 & 4.6 & 6.2 & 6.3 & 6.8 & 5.6 & 5.8 & 5.5 & 6.7 & 5.9 & 7.4 & 6.7 \\ 
   \hline
\end{tabular}
\end{center}
\end{sidewaystable}

Table~\ref{H0biv} reports rejection percentages under $H_0$ when $F_0$ is a bivariate c.d.f.\ with Clayton or Gumbel--Hougaard copula and standard exponential or standard normal margins. The parameter of the copula is chosen so that the theoretical value of Kendall's tau is equal to $\tau \in \{0,0.25,0.5,0.75\}$. Note that $\tau=0$ corresponds to independence while $\tau=0.5$ entails moderate dependence between the two components. Note also that the parameter $m$ used in to discretize uniformly $\Smc_2^+$ was set to 8 (see Appendix~\ref{implementation} for more details). The settings $m=6$ and $m=10$ were also considered but this did not seem to affect the results much. As the tests whose approximate $p$-value is computed using the processes $\widehat{\D}_n^{(j)}$ appeared systematically too liberal for the sample sizes under consideration, we only report the results of the tests based on the $\widecheck{\D}_n^{(j)}$ in Table~\ref{H0biv}. As can be seen, the latter tests seem to hold their level reasonably well except perhaps the tests based on $\widecheck{V}_{n,\vee}$ and $\widecheck{V}_{n,+}$ which might be slightly too liberal for $\tau = 0.75$ and the sample sizes under consideration. Similar results were obtained in dimension three. 

\begin{sidewaystable}[ht]
\begin{center}
\caption{Percentage of rejection of $H_0$ computed from 1000 samples of size $n$ generated under $H_1$ defined in~(\ref{H1}),  where $k^\star = \ip{nt}$, $F_1$ and $F_2$ are bivariate c.d.f.s that only differ in their first margin which is standard exponential for $F_1$ and exponential with rate 0.5 for $F_2$. The common copula $C$ of $F_1$ and $F_2$ is either the Clayton (Cl) or the Gumbel--Hougaard (GH) with a Kendall's tau of $\tau$. The second margin of both $F_1$ and $F_2$ is standard exponential. The parameter $m$ used to uniformly discretize $\Smc_2^+$ was set to 8.}
\label{H1marg}
\begin{tabular}{rrrrrrrrrrrrrrrrrrr}
  \hline
  \multicolumn{3}{c}{} & \multicolumn{8}{c}{Cl} & \multicolumn{8}{c}{GH} \\ \cmidrule(lr){4-11} \cmidrule(lr){12-19} $n$ & $\tau$ & $t$ &$\widecheck{S}_{n,\vee}$ & $\widecheck{S}_{n,+}$ & $\widecheck{T}_{n,\vee}$ & $\widecheck{T}_{n,+}$ & $\widecheck{U}_{n,\vee}$ & $\widecheck{U}_{n,+}$ & $\widecheck{V}_{n,\vee}$ & $\widecheck{V}_{n,+}$ & $\widecheck{S}_{n,\vee}$ & $\widecheck{S}_{n,+}$ & $\widecheck{T}_{n,\vee}$ & $\widecheck{T}_{n,+}$ & $\widecheck{U}_{n,\vee}$ & $\widecheck{U}_{n,+}$ & $\widecheck{V}_{n,\vee}$ & $\widecheck{V}_{n,+}$ \\ \hline
50 & 0.0 & 0.10 & 5.3 & 5.1 & 5.7 & 5.4 & 6.1 & 6.8 & 7.4 & 9.0 & 5.6 & 5.6 & 6.5 & 7.2 & 6.8 & 7.2 & 7.9 & 9.0 \\ 
  50 & 0.0 & 0.25 & 12.4 & 12.4 & 12.3 & 13.8 & 19.6 & 19.1 & 16.6 & 18.4 & 10.6 & 11.0 & 12.8 & 13.4 & 19.4 & 19.5 & 17.4 & 19.1 \\ 
  50 & 0.0 & 0.50 & 21.3 & 19.9 & 23.8 & 22.2 & 41.2 & 36.7 & 35.0 & 35.2 & 24.6 & 23.4 & 25.2 & 23.5 & 43.4 & 39.5 & 35.2 & 35.4 \\ 
  50 & 0.5 & 0.10 & 4.3 & 3.9 & 4.8 & 4.8 & 4.3 & 4.8 & 6.4 & 8.4 & 5.0 & 5.2 & 5.9 & 5.5 & 7.3 & 7.7 & 9.3 & 10.6 \\ 
  50 & 0.5 & 0.25 & 8.3 & 7.9 & 11.6 & 12.1 & 19.5 & 20.8 & 26.3 & 30.2 & 8.7 & 8.7 & 11.8 & 11.8 & 21.6 & 25.2 & 30.0 & 32.9 \\ 
  50 & 0.5 & 0.50 & 17.5 & 15.3 & 26.1 & 22.9 & 49.2 & 50.1 & 59.5 & 56.8 & 18.6 & 18.2 & 26.9 & 23.9 & 48.5 & 48.7 & 63.1 & 58.1 \\ 
  100 & 0.0 & 0.10 & 7.9 & 8.3 & 7.9 & 9.7 & 9.8 & 10.4 & 9.9 & 11.9 & 7.3 & 7.0 & 8.7 & 9.3 & 10.1 & 10.6 & 9.1 & 11.6 \\ 
  100 & 0.0 & 0.25 & 25.0 & 23.9 & 23.7 & 26.1 & 44.5 & 42.7 & 33.5 & 35.9 & 25.9 & 25.9 & 27.5 & 30.6 & 47.3 & 45.3 & 34.6 & 40.0 \\ 
  100 & 0.0 & 0.50 & 44.1 & 43.1 & 49.9 & 47.7 & 72.1 & 68.6 & 63.8 & 61.6 & 45.8 & 44.6 & 53.5 & 49.8 & 74.2 & 70.0 & 63.8 & 62.3 \\ 
  100 & 0.5 & 0.10 & 5.9 & 5.5 & 6.4 & 7.3 & 8.9 & 9.7 & 11.6 & 16.4 & 6.6 & 6.6 & 6.9 & 8.4 & 9.3 & 11.1 & 10.8 & 18.4 \\ 
  100 & 0.5 & 0.25 & 14.7 & 15.6 & 22.4 & 24.5 & 50.8 & 53.7 & 57.5 & 60.3 & 17.9 & 17.3 & 28.1 & 28.0 & 59.8 & 61.6 & 70.1 & 70.5 \\ 
  100 & 0.5 & 0.50 & 28.2 & 30.2 & 49.0 & 48.7 & 84.7 & 85.8 & 89.2 & 87.6 & 34.2 & 34.3 & 51.4 & 48.6 & 87.7 & 88.8 & 94.2 & 92.6 \\ 
  200 & 0.0 & 0.10 & 11.2 & 11.5 & 10.5 & 14.4 & 15.7 & 19.1 & 11.6 & 19.3 & 11.2 & 12.0 & 10.3 & 13.9 & 17.4 & 19.7 & 12.8 & 19.4 \\ 
  200 & 0.0 & 0.25 & 49.6 & 48.4 & 56.9 & 57.1 & 81.2 & 77.5 & 67.5 & 68.6 & 52.6 & 51.5 & 58.6 & 58.4 & 81.4 & 79.0 & 67.0 & 71.0 \\ 
  200 & 0.0 & 0.50 & 80.1 & 78.5 & 87.4 & 84.9 & 97.0 & 95.8 & 92.6 & 91.6 & 77.3 & 75.9 & 85.9 & 81.9 & 97.2 & 96.2 & 92.1 & 90.4 \\ 
  200 & 0.5 & 0.10 & 7.0 & 7.8 & 9.9 & 14.6 & 16.3 & 22.9 & 21.1 & 30.2 & 8.0 & 8.4 & 8.7 & 12.3 & 18.6 & 24.3 & 22.2 & 35.2 \\ 
  200 & 0.5 & 0.25 & 25.8 & 28.3 & 53.6 & 54.2 & 87.7 & 88.7 & 90.1 & 90.4 & 30.4 & 32.8 & 55.6 & 54.7 & 95.0 & 94.6 & 97.7 & 95.9 \\ 
  200 & 0.5 & 0.50 & 52.9 & 57.0 & 83.9 & 82.3 & 99.4 & 99.3 & 99.4 & 99.5 & 67.5 & 66.9 & 86.7 & 84.7 & 99.9 & 99.8 & 100.0 & 100.0 \\ 
   \hline
\end{tabular}
\end{center}
\end{sidewaystable}

Estimated rejection rates when both distributions in $H_1$ have the same copula (Clayton or Gumbel--Hougaard) but differ in one margin are given in Table~\ref{H1marg}. To be precise, the bivariate distributions $F_1$ and $F_2$ only differ in the first margin which is standard exponential for $F_1$ and exponential with rate 0.5 for $F_2$, while the second margin of both $F_1$ and $F_2$ is standard exponential. As one can see, the tests based on half-spaces are substantially more powerful than the tests based on lower-left orthants. The test based on $\widecheck{U}_{n,\vee}$ is the most powerful, overall, when $\tau = 0$, while, when $\tau=0.5$, it is the test based on $\widecheck{V}_{n,+}$ that has, overall, the highest rejection rates.

\begin{sidewaystable}[ht]
\begin{center}
\caption{Percentage of rejection of $H_0$ computed from 1000 samples of size $n$ generated under $H_1$ defined in~(\ref{H1}), where $k^\star = \ip{nt}$, $F_1$ and $F_2$ are bivariate c.d.f.s with standard exponential margins and copula from the same family (Clayton (Cl) or Gumbel--Hougaard (GH)) but with different parameter values. The copula of $F_1$ has a Kendall's tau of 0.1, while that of $F_2$ has a Kendall's tau of $\tau$. The parameter $m$ used to uniformly discretize $\Smc_2^+$ was set to 8.}
\label{H1cop}
\begin{tabular}{rrrrrrrrrrrrrrrrrrr}
  \hline
  \multicolumn{3}{c}{} & \multicolumn{8}{c}{Cl} & \multicolumn{8}{c}{GH} \\ \cmidrule(lr){4-11} \cmidrule(lr){12-19} $n$ & $\tau$ & $t$ & $\widecheck{S}_{n,\vee}$ & $\widecheck{S}_{n,+}$ & $\widecheck{T}_{n,\vee}$ & $\widecheck{T}_{n,+}$ & $\widecheck{U}_{n,\vee}$ & $\widecheck{U}_{n,+}$ & $\widecheck{V}_{n,\vee}$ & $\widecheck{V}_{n,+}$ & $\widecheck{S}_{n,\vee}$ & $\widecheck{S}_{n,+}$ & $\widecheck{T}_{n,\vee}$ & $\widecheck{T}_{n,+}$ & $\widecheck{U}_{n,\vee}$ & $\widecheck{U}_{n,+}$ & $\widecheck{V}_{n,\vee}$ & $\widecheck{V}_{n,+}$ \\ \hline
50 & 0.3 & 0.10 & 5.5 & 5.9 & 5.2 & 5.1 & 4.3 & 3.8 & 5.2 & 6.0 & 4.7 & 5.2 & 5.1 & 5.1 & 5.8 & 4.9 & 6.8 & 5.8 \\ 
  50 & 0.3 & 0.25 & 5.9 & 5.1 & 6.8 & 5.3 & 5.4 & 4.7 & 6.2 & 6.5 & 5.6 & 5.1 & 6.0 & 5.6 & 5.9 & 4.2 & 7.2 & 6.2 \\ 
  50 & 0.3 & 0.50 & 4.4 & 4.5 & 6.8 & 4.6 & 3.9 & 2.9 & 6.2 & 4.9 & 6.1 & 6.1 & 6.3 & 6.3 & 4.7 & 4.7 & 6.8 & 6.7 \\ 
  50 & 0.7 & 0.10 & 5.7 & 5.7 & 6.3 & 6.2 & 5.1 & 4.9 & 6.4 & 6.4 & 5.4 & 5.6 & 5.8 & 5.6 & 3.8 & 4.2 & 6.3 & 6.6 \\ 
  50 & 0.7 & 0.25 & 7.5 & 8.0 & 6.8 & 8.1 & 4.4 & 3.8 & 6.9 & 9.3 & 7.3 & 9.5 & 9.1 & 8.9 & 5.8 & 5.5 & 8.0 & 9.8 \\ 
  50 & 0.7 & 0.50 & 13.4 & 13.7 & 13.9 & 14.3 & 7.1 & 8.3 & 13.9 & 14.8 & 12.8 & 13.1 & 12.6 & 11.8 & 6.1 & 6.9 & 13.3 & 14.0 \\ 
  100 & 0.3 & 0.10 & 4.6 & 4.8 & 5.9 & 5.8 & 4.4 & 4.5 & 6.5 & 6.4 & 5.0 & 5.2 & 5.9 & 6.4 & 6.4 & 5.7 & 8.0 & 7.6 \\ 
  100 & 0.3 & 0.25 & 4.8 & 5.1 & 5.4 & 5.3 & 4.7 & 4.4 & 5.2 & 5.8 & 5.3 & 5.6 & 6.4 & 7.2 & 4.8 & 4.3 & 5.4 & 6.4 \\ 
  100 & 0.3 & 0.50 & 7.6 & 7.4 & 7.2 & 6.5 & 5.3 & 4.4 & 6.2 & 6.1 & 7.0 & 6.8 & 7.0 & 7.0 & 4.4 & 4.5 & 5.8 & 6.9 \\ 
  100 & 0.7 & 0.10 & 7.1 & 6.5 & 7.2 & 6.8 & 5.6 & 4.8 & 5.8 & 8.7 & 5.1 & 5.8 & 5.9 & 5.8 & 3.8 & 5.0 & 5.4 & 7.0 \\ 
  100 & 0.7 & 0.25 & 12.9 & 13.4 & 13.0 & 13.4 & 6.9 & 8.5 & 12.0 & 18.6 & 13.9 & 15.7 & 12.7 & 13.6 & 5.8 & 7.7 & 13.5 & 21.2 \\ 
  100 & 0.7 & 0.50 & 25.4 & 26.7 & 23.8 & 25.0 & 11.4 & 15.0 & 32.5 & 36.0 & 27.1 & 27.1 & 25.0 & 25.7 & 12.7 & 16.6 & 34.2 & 39.4 \\ 
  200 & 0.3 & 0.10 & 6.0 & 5.9 & 5.0 & 5.2 & 5.6 & 4.5 & 5.4 & 4.6 & 4.5 & 4.4 & 4.9 & 5.3 & 4.9 & 4.9 & 6.0 & 6.2 \\ 
  200 & 0.3 & 0.25 & 6.2 & 7.1 & 5.9 & 6.9 & 4.7 & 5.6 & 5.7 & 7.1 & 6.0 & 6.2 & 5.8 & 5.5 & 4.7 & 5.3 & 6.2 & 6.8 \\ 
  200 & 0.3 & 0.50 & 7.9 & 8.4 & 7.4 & 7.4 & 5.9 & 5.5 & 6.9 & 7.2 & 10.8 & 10.6 & 9.3 & 10.4 & 7.2 & 7.2 & 8.6 & 9.3 \\ 
  200 & 0.7 & 0.10 & 8.9 & 11.2 & 9.1 & 11.5 & 6.5 & 7.1 & 8.0 & 10.8 & 8.9 & 9.6 & 8.7 & 11.1 & 5.1 & 6.7 & 8.1 & 11.8 \\ 
  200 & 0.7 & 0.25 & 26.0 & 28.1 & 25.3 & 25.3 & 11.0 & 15.0 & 25.1 & 29.6 & 28.6 & 30.0 & 25.9 & 28.3 & 14.3 & 20.4 & 30.1 & 43.0 \\ 
  200 & 0.7 & 0.50 & 43.5 & 44.8 & 43.7 & 44.4 & 22.6 & 30.4 & 56.8 & 61.4 & 48.0 & 45.9 & 45.2 & 44.5 & 35.2 & 51.7 & 79.8 & 85.2 \\ 
   \hline
\end{tabular}
\end{center}
\end{sidewaystable}

Table~\ref{H1cop} reports rejection percentages when the change in distribution is only due to a change in the dependence structure: both $F_1$ and $F_2$ have standard exponential margins but the copula of $F_1$ is the Clayton (resp.\ Gumbel--Hougaard) copula with a Kendall's tau of 0.1, while that of $F_2$ is the Clayton (resp.\ Gumbel--Hougaard) copula with a Kendall's tau of $\tau \in \{0.3,0.7\}$. As one can see from the overall low rejection percentages, this problem appears to be more difficult than the previous one. The tests have hardly any power for $\tau = 0.3$. For $\tau = 0.7$, the tests based on $\widecheck{V}_{n,\vee}$ and $\widecheck{V}_{n,+}$ are the most powerful, overall, although one should be cautious as they might be slightly too liberal in the case of strongly dependent data according to Table~\ref{H0biv}.

The setting used to obtain Table~\ref{H1cop} was finally extended to dimension three (results not reported). The conclusions are very similar to those obtained in dimension two with the difference that all the rejection rates are higher. Hence, as could have been expected, detecting a change in the dependence becomes easier as the dimension increases.

\section{Practical recommendations and illustration}
\label{illustration}

From the results of the Monte Carlo experiments partially reported in the previous section, the tests based on $\widecheck{S}_{n,\vee}$ and $\widecheck{T}_{n,+}$ seem good choices in the univariate case, while the test based on $\widecheck{V}_{n,+}$ seems to be a good one in the multivariate case.

As an illustration, we applied the tests based on the processes $\widecheck{\D}_n^{(j)}$ to the trivariate hydrological data collected at the Ceppo Morelli dam, Italy, studied in \cite{SalDeMDur11}. The data consist of annual maxima for 49 years (in the range 1937--1994) of three variables: L (dam reservoir water level in $m$), Q (peak flow in $m^3.s^{-1}$) and V (peak volume in $10^6$ $m^3$). For a detailed description of the data, we refer the reader to Section~2 of \cite{SalDeMDur11}.

\begin{table}[ht]
\begin{center}
\caption{Approximate $p$-values of the tests based on the processes $\widecheck{\D}_n^{(j)}$ for the trivariate hydrological data considered in Section~\ref{illustration}. The trivariate (resp.\ bivariate) tests based on half-spaces were run with $m=32$ (resp.\ $m=8$). The approximate $p$-values were computed from $N=10,000$ multiplier realizations.}
\label{pval}
\begin{tabular}{crrrrrrrr}
  \hline
  Variables & $\widecheck{S}_{n,\vee}$ & $\widecheck{S}_{n,+}$ & $\widecheck{T}_{n,\vee}$ & $\widecheck{T}_{n,+}$ & $\widecheck{U}_{n,\vee}$ & $\widecheck{U}_{n,+}$ & $\widecheck{V}_{n,\vee}$ & $\widecheck{V}_{n,+}$ \\ \hline
(L,Q,V) & 0.114 & 0.120 & 0.015 & 0.028 & 0.010 & 0.010 & 0.004 & 0.006 \\ 
  L & 0.479 & 0.314 & 0.510 & 0.236 &  &  &  &  \\ 
  (Q,V) & 0.024 & 0.028 & 0.012 & 0.012 & 0.015 & 0.014 & 0.004 & 0.007 \\ 
   \hline
\end{tabular}
\end{center}
\end{table}

We first tested for a change in the distribution of (L,Q,V). The approximate $p$-values of the eight tests based on the processes $\widecheck{\D}_n^{(j)}$ are given in the first line of Table~\ref{pval}. Since there are both physical and statistical reasons to believe that L is independent of (Q,V) as explained in \cite{SalDeMDur11}, as a second step, we tested for a change in the distribution of L and in the distribution of (Q,V) separately. The obtained approximate $p$-values are reported in the second and third lines of Table~\ref{pval}. As can be seen from the results of the test based on $\widecheck{V}_{n,+}$, there is strong evidence of a change in the distributions of (L,Q,V) and (Q,V). From the second line of Table~\ref{pval}, we see that, on the contrary, there is no evidence of a change in the distribution of L. The latter finding is completely consistent with the fact that the variability of L is mainly due to the management policy of the reservoir which is constant over time. Indeed, as explained in \cite{SalDeMDur11}, the target of the dam manager is to keep a high water level in order to maximize electricity production. 

As classically done in the literature, under the hypothesis of a single break in a distribution, the change-point can be estimated by one of $\arg \max_{1 \leq k \leq n-1} \widecheck{S}_{n,k}$, $\arg \max_{1 \leq k \leq n-1} \widecheck{T}_{n,k}$, $\arg \max_{1 \leq k \leq n-1} \widecheck{U}_{n,k}$ or $\arg \max_{1 \leq k \leq n-1} \widecheck{V}_{n,k}$ depending on which test one wants to consider. For instance, the last estimator gives 31 for both $(L,Q,V)$ and $(Q,V)$, which corresponds to a change after the year 1976. 

Finally, let us mention that the approach based on multivariate empirical c.d.f.s considered in \citet[Section 2.6]{CsoHor97} and in \cite{GomHor99} has been extended by \cite{Ino01} to serially dependent observations, although the latter work is not aware of the former ones. A future research direction would be to study generalizations as such proposed in this work in the setting considered by \cite{Ino01}. 

\section*{Acknowledgments}

The authors would like to thank two anonymous referees for their very insightful and constructive comments which helped to improve the paper. The authors are also very grateful to Johan Segers for his help with Lemma~\ref{phi}, and to Gianfausto Salvadori for providing the trivariate hydrological data analyzed in the last section. This work was started in September 2011 during Mark Holmes's visit to University of Pau.  We gratefully acknowledge the support received during this visit.


\appendix

\section{Proofs}

\subsection{Proof of Theorem \ref{multiplier}}

To prove Theorem \ref{multiplier}, we first show a lemma.

\begin{lem}
\label{tildeZn_hatZn}
Let $\FF$ be a $P$-Donsker class of functions with measurable envelope $F_e$ such that $P F_e^2 < \infty$. Then,
$$
\sup_{s \in [0,1]} \| \widetilde{\Z}_n(s,f) - \widehat{\Z}_n(s,f) \|_\FF \aso 0.
$$
\end{lem}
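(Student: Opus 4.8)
The plan is to turn the uniform difference $\sup_{s}\|\widetilde{\Z}_n-\widehat{\Z}_n\|_\FF$ into a product of two maxima by an exact algebraic identity, and then control each maximum by a law of the iterated logarithm. Fix $s\in[0,1]$ and $f\in\FF$. Since $\{f(X_i)-Pf\}-\{f(X_i)-\P_{\ip{ns}}f\}=\P_{\ip{ns}}f-Pf$ does not depend on $i$,
$$
\widetilde{\Z}_n(s,f)-\widehat{\Z}_n(s,f)=\Bigl(\tfrac1{\sqrt n}\textstyle\sum_{i=1}^{\ip{ns}}\xi_i\Bigr)\bigl(\P_{\ip{ns}}f-Pf\bigr)=\bar\xi_{\ip{ns}}\,\Z_n(s,f)=\sqrt{\lambda_n(s)}\,\bar\xi_{\ip{ns}}\,\G_{\ip{ns}}f ,
$$
with the usual convention that this vanishes when $\ip{ns}=0$. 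Writing $\Xi_k=\sum_{i=1}^k\xi_i$, so that $\sqrt k\,|\bar\xi_k|=|\Xi_k|/\sqrt k$, and letting $s$ run over $[0,1]$ (hence $\ip{ns}$ over $\{0,1,\dots,n\}$), this gives the clean reduction
$$
\sup_{s\in[0,1]}\bigl\|\widetilde{\Z}_n(s,f)-\widehat{\Z}_n(s,f)\bigr\|_\FF=\frac1{\sqrt n}\,\max_{1\le k\le n}\frac{|\Xi_k|}{\sqrt k}\,\|\G_k\|_\FF .
$$

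Next I would bound the two maxima. Since $\xi_1$ has mean $0$ and variance $1$, the Hartman--Wintner law of the iterated logarithm gives $|\Xi_k|=O(\sqrt{k\log\log k})$ almost surely, hence $\max_{1\le k\le n}|\Xi_k|/\sqrt k=O(\sqrt{\log\log n})$ almost surely. For the empirical part, $\FF$ being $P$-Donsker it is pregaussian, and $PF_e^2<\infty$ gives $E^*\|\delta_{X_1}-P\|_\FF^2<\infty$, so the bounded law of the iterated logarithm for the empirical process applies and yields $\|\G_k\|_\FF=O(\sqrt{\log\log k})$ outer almost surely; together with the crude bound $\|\G_k\|_\FF\le k^{-1/2}\sum_{i=1}^k\{F_e(X_i)+PF_e\}$ for the finitely many small indices, this gives $\max_{1\le k\le n}\|\G_k\|_\FF=O(\sqrt{\log\log n})$ outer almost surely. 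Plugging both into the reduction above,
$$
\sup_{s\in[0,1]}\bigl\|\widetilde{\Z}_n(s,f)-\widehat{\Z}_n(s,f)\bigr\|_\FF\le\frac1{\sqrt n}\Bigl(\max_{1\le k\le n}\tfrac{|\Xi_k|}{\sqrt k}\Bigr)\Bigl(\max_{1\le k\le n}\|\G_k\|_\FF\Bigr)=O\!\Bigl(\tfrac{\log\log n}{\sqrt n}\Bigr)\longrightarrow 0
$$
outer almost surely, which is the assertion; the ``outer'' qualifier (matching the $\aso$ in the statement) is needed only because $\|\G_k\|_\FF$ need not be measurable.

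The one genuinely delicate point is obtaining \emph{almost sure} rather than merely in-(outer-)probability control of the supremum. Estimating the two maxima separately and using only ``$\bar\xi_{\ip{ns}}\to0$ a.s.'' (SLLN) together with tightness of $\{\Z_n\}$ (from $P$-Donskerness) produces a product of an a.s.-null sequence with an $O_{\Pr^*}(1)$ sequence, i.e.\ only $\overset{\Pr^*}{\to}0$; splitting $[0,1]$ at a small $\delta$ and then letting $\delta\downarrow0$ runs into the same wall unless one already controls $\sup_{s}\|\Z_n(s,\cdot)\|_\FF$ at a polylogarithmic a.s.\ rate. This is exactly why Step 2 leans on the quantitative LIL rate for the empirical process under $PF_e^2<\infty$; an alternative, if one prefers to avoid that tool, is to bound $E^*$ of the right-hand side of the reduction by a (sequential) multiplier inequality and then pass to an almost-sure statement via Borel--Cantelli along the subsequence $n=2^m$.
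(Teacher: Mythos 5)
Your proof is correct and follows essentially the same route as the paper's: the same algebraic identity $\widetilde{\Z}_n(s,f)-\widehat{\Z}_n(s,f)=\bigl(n^{-1/2}\sum_{i\le\ip{ns}}\xi_i\bigr)(\P_{\ip{ns}}f-Pf)$, followed by the Hartman--Wintner LIL for the multiplier partial sums and the bounded LIL for the empirical process (valid under $P$-Donsker plus $PF_e^2<\infty$), yielding an $O(\log\log n/\sqrt n)$ outer-almost-sure bound. The paper handles nonmeasurability via measurable covers $Y_k^*$, which is the precise form of the ``outer'' qualifier you invoke; otherwise the two arguments coincide.
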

\begin{proof}
For any $s \in [0,1]$ and $f \in \FF$, we have
$$
\widetilde{\Z}_n(s,f) - \widehat{\Z}_n(s,f) = \left( \frac{1}{\sqrt{n}} \sum_{i=1}^{\ip{ns}} \xi_i \right) ( \P_{\ip{ns}} f - Pf).
$$

Now, let $Y_k=\| \P_{k} f - Pf\|_{\FF}$, $k \in \{1,\dots,n\}$.  Note that $Y_k$ need not be measurable, but, by the assumption that $\sup_{f\in \FF}|f(x)|\le F_e(x)$ for all $x \in \R^d$, we have that 
$$
\| \P_{k} f - Pf\|_{\FF}\le \P_kF_e+PF_e<\infty,
$$
where $\P_kF_e+PF_e$ is measurable. Thus, for each $k$, there exists a smallest random variable $Y_k^* < \infty$ such that $Y_k(\omega)\le Y_k^*(\omega)$ for every $\omega$ \citep[see][Lemma 1.2.1]{vanWel96}.

To prove the claim, it suffices to show that 
$$
A_n = \max_{3\le k\le n} Y_k^*  \times \frac{1}{\sqrt{n}} \sum_{i=1}^k \xi_i \as 0,
$$
where the variables are all measurable. Now, for any $n \geq 3$, let $a_n = n^{-1/2} (\log \log n)^{1/2}$. Then,
$$
A_n = \max_{3\le k\le n} \frac{Y_k^*}{a_k}  \times \frac{a_k^2 k}{n^{1/2}} \times \frac{\bar \xi_k}{a_k} \leq \max_{3\le k\le n} \frac{Y_k^*}{a_k} \times n^{-1/2} \log\log n \times \max_{3\le k\le n} \frac{\bar \xi_k}{a_k},
$$
where $\bar \xi_k = k^{-1} \sum_{i=1}^k \xi_i$. 

Since $\FF$ is $P$-Donsker with measurable envelope function $F_e$ satisfying $P F_e^2 < \infty$, we know from the law of the iterated logarithm for empirical processes \citep[page 31]{DudPhi83,Kos08} that $\limsup Y^*_n/a_n < \infty$ almost surely, which implies that $\max_{3\le k\le n} Y_k^* / a_k \leq \sup_{k \geq 3} Y_k^* / a_k < \infty$ almost surely. Similarly, from the law of the iterated logarithm for the mean of an i.i.d.\ sequence with expectation~0 and variance~1, we have $\max_{3\le k\le n} \bar \xi_k/a_k \leq \sup_{k \geq 3} \bar \xi_k/a_k < \infty$ almost surely. The desired result finally follows from the fact that $n^{-1/2} \log\log n \to 0$.
\end{proof}

\begin{proof}[\bf Proof of Theorem \ref{multiplier}.] 
The proof of the weak convergence of the finite-dimensional marginal distributions of $(\Z_n,\widetilde{\Z}_n)$ to those of $(\Z_P,\Z'_P)$ is a more complicated version of the corresponding result for convergence of the rescaled random walk increments to Brownian motion. It is omitted here for the sake of brevity.

To obtain that $(\Z_n,\widetilde{\Z}_n ) \leadsto (\Z_P,\Z'_P)$ in $\{ \ell^\infty([0,1] \times \FF) \}^2$, it remains to show that $(\Z_n,\widetilde{\Z}_n)$ is asymptotically tight \citep[see e.g.][Section 18.3]{van98}, which holds if both $\Z_n$ and $\widetilde{\Z}_n$ converge weakly to tight random elements. 

From Theorem 2.12.1 of \cite{vanWel96}, we have that $\Z_n \leadsto \Z_P$ in $\ell^\infty([0,1] \times \FF)$. Now, let
$$
\G_n' = \frac{1}{\sqrt{n}} \sum_{i=1}^n \xi_i (\delta_{X_i} - P) 
$$
be a {\em multiplier} version of $\G_n$. The class $\FF$ being $P$-Donsker, from the functional unconditional multiplier central limit theorem \citep[see e.g.][]{vanWel96,Kos08}, we have that $\G_n' \leadsto \G'_P$ in $\ell^\infty(\FF)$, where $\G'_P$ is an independent copy of the $P$-Brownian bridge $\G_P$, which implies that $\G_n'$ is asymptotically tight. To show that $\widetilde{\Z}_n \leadsto \Z'_P$, where $\Z'_P$ is an independent copy of $\Z_P$, one can use the asymptotic tightness of $\G_n'$ and proceed as in the proof of Theorem 2.12.1 of \cite{vanWel96}. Note that the proof can be further simplified if the process $\widetilde{\Z}_n$ is symmetric in the sense of Chapter A.1 of \cite{vanWel96}, which happens if $\xi_1,\dots,\xi_n$ are symmetrically distributed around zero. In that case, Ottaviani's inequality can be replaced by L\'evy's inequality \citep[see e.g.][Proposition A.1.2]{vanWel96}, which shortens the argument.

Hence, we have that $(\Z_n,\widetilde{\Z}_n ) \leadsto (\Z_P,\Z'_P )$ in $\{ \ell^\infty([0,1] \times \FF) \}^2$. The fact that $\Z'_P$ is independent of $\Z_P$ comes via that the finite-dimensional distributions, which are multivariate normal, and the fact that $\Z_n$ and $\widetilde{\Z}_n$ are uncorrelated. 

Next, notice that 
$$
\sup_{s \in [0,1]} \| \widetilde{\Z}_n(s,f) - \widecheck{\Z}_n(s,f) \|_\FF  = \| \P_n f - Pf \|_\FF \times \sup_{s \in [0,1]} \left| \frac{1}{\sqrt{n}} \sum_{i=1}^{\ip{ns}} \xi_i \right| 
$$
converges in outer probability to zero because $\sup_{s \in [0,1]} | n^{-1/2} \sum_{i=1}^{\ip{ns}} \xi_i |$ converges weakly to the supremum of the absolute value of Brownian motion and $\| \P_n f - Pf \|_\FF \p 0$.

From the continuous mapping theorem, we have that $(\Z_n,\widetilde{\Z}_n,\widetilde{\Z}_n,\widetilde{\Z}_n ) \leadsto (\Z_P,\Z'_P,\Z'_P,\Z'_P )$ in $\{ \ell^\infty([0,1] \times \FF) \}^4$. The desired result finally follows from the previous remark and Lemma~\ref{tildeZn_hatZn}.
\end{proof}



\subsection{Proof of Theorem \ref{thm_H0}}

\begin{proof}
For any $s \in [0,1]$ and $f \in \FF$, let
$$
\widetilde{\D}_n(s,f) = \{ 1 - \lambda_n(s) \} \widetilde{\Z}_n(s,f) - \lambda_n(s) \{ \widetilde{\Z}_n(1,f) - \widetilde{\Z}_n(s,f) \}= \widetilde{\Z}_n(s,f) - \lambda_n(s) \widetilde{\Z}_n(1,f). 
$$
Then, from Theorem~\ref{multiplier},~(\ref{DnH0}), the definitions of $\widetilde{\D}_n$ and $\widecheck{\D}_n$, and the continuous mapping theorem, we obtain that 
$$
\left( \D_n,\widetilde{\D}_n,\widecheck{\D}_n \right) \leadsto \left( \D_{P_0},\D'_{P_0},\D'_{P_0} \right)
$$
in $\{ \ell^\infty([0,1] \times \FF) \}^3$. To obtain the desired result, it remains to show that $\widetilde{\D}_n - \widehat{\D}_n \p 0$ in $\ell^\infty([0,1] \times \FF)$. From the definitions of $\widetilde{\D}_n$ and $\widehat{\D}_n$, we have that 
\begin{multline*}
\sup_{s \in [0,1]} \left\| \widetilde{\D}_n(s,f) - \widehat{\D}_n(s,f) \right\|_\FF \leq \sup_{s \in [0,1]} \{ 1 - \lambda_n(s) \} \sup_{s \in [0,1]} \left\| \widetilde{\Z}_n(s,f) - \widehat{\Z}_n(s,f) \right\|_\FF \\ +  \sup_{s \in [0,1]} \lambda_n(s) \sup_{s \in [0,1]} \left\| \left\{ \widetilde{\Z}_n(1,f) - \widetilde{\Z}_n(s,f) \right\} -  \widehat{\Z}_n^\star(s,f)\right\|_\FF,
\end{multline*}
where $\widehat{\Z}_n^\star$ is defined in~(\ref{hatZnstar}). The second supremum over $s$ on the right of the previous inequality converges outer almost surely to zero according to Lemma~\ref{tildeZn_hatZn}. Furthermore, it can be verified that the last supremum over $s$ (written for instance as a maximum over $1 \leq \ip{ns} \leq n$) is nothing else than the version of the second supremum computed from the ``reversed'' sequence $(\xi_n,X_n),(\xi_{n-1},X_{n-1}),\dots,(\xi_1,X_1)$. As the latter sequence and the original sequence have the same distribution, Lemma~\ref{tildeZn_hatZn} implies that the last supremum converges in outer probability to zero.
\end{proof}


\subsection{Proof of Theorem \ref{thm_H1}}

To prove Theorem \ref{thm_H1}, we first prove a lemma.

\begin{lem}
\label{lem:boundprob_1}
Let $\xi_1,\ldots,\xi_n$ be i.i.d.\ random variables with mean 0, variance 1 and satisfying $\int_0^\infty \{ \Pr(|\xi_1| > x) \}^{1/2} \dd x < \infty$, let $X_1,\ldots,X_n$ be i.i.d.\ random variables with law $P$ independent of $\xi_1,\ldots,\xi_n$, and let $\FF$ be a $P$-Donsker class with measurable envelope function $F_e$ such that $PF_e<\infty$. Then, for every $0 \leq t_1 \leq t_2 \leq 1$ and every $\vep > 0$, there exists $M > 0$ such that
$$
\sup_{n \in \N} \Pr^* \left( \sup_{s \in [t_1,t_2] } \left\| \frac{1}{\sqrt{n}} \sum_{i=1}^{\ip{ns}} \xi_i f(X_i) \right\|_{\FF} > M \right) < \vep.
$$
\end{lem}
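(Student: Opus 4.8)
The plan is to reduce the statement to a known maximal inequality for multiplier empirical processes by treating the partial-sum structure with a suitable maximal-inequality device (Ottaviani-type) and then invoking the multiplier inequality from empirical process theory. First I would observe that it suffices to prove the claim for $t_1 = 0$, $t_2 = 1$, since the supremum over $s \in [t_1,t_2]$ is bounded by the supremum over $s \in [0,1]$; moreover, writing $k = \ip{ns}$, the supremum over $s \in [0,1]$ equals $\max_{0 \le k \le n} \| n^{-1/2} \sum_{i=1}^k \xi_i f(X_i) \|_\FF$. Thus the goal becomes: the sequence $\max_{0 \le k \le n}\| n^{-1/2}\sum_{i=1}^k \xi_i f(X_i)\|_\FF$ is bounded in outer probability, uniformly in $n$.

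The key step is to control the maximum over $k$ by the single term at $k=n$. Here I would use an Ottaviani-type maximal inequality conditionally on $X_1,\dots,X_n$: conditionally on the data, $\xi_1 f(X_1), \xi_2 f(X_2),\dots$ (viewed as elements of $\ell^\infty(\FF)$) are independent, so the partial sums $S_k = \sum_{i=1}^k \xi_i f(X_i)$ form a sum of independent increments, and Ottaviani's inequality gives, for suitable $M$ and $M'$,
$$
\Pr^*\!\left( \max_{0\le k\le n}\| S_k\|_\FF > M'\sqrt{n} \,\Big|\, X_1,\dots,X_n \right) \le \frac{\Pr^*\!\left( \| S_n\|_\FF > (M'-M)\sqrt{n} \,\big|\, X_1,\dots,X_n\right)}{1 - \max_{k}\Pr^*\!\left(\|S_n - S_k\|_\FF > M\sqrt{n}\,\big|\,X_1,\dots,X_n\right)},
$$
provided the denominator is bounded away from zero. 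To make the denominator uniformly controllable one needs $\max_k \Pr^*(\|S_n - S_k\|_\FF > M\sqrt n \mid X_1,\dots,X_n)$ to be small with high (unconditional) probability; since $\|S_n - S_k\|_\FF$ is again a multiplier-process norm over at most $n$ terms, one applies Chebyshev together with the multiplier inequality \citep[e.g.][Lemma 2.9.1]{vanWel96} to bound $E^*\|S_n-S_k\|_\FF \le C \sqrt n \, \|\sqrt{\textstyle\sum_{i\le n}} f^2(X_i)\|_\FF^{1/2}$-type quantities — crucially the $P$-Donsker hypothesis plus $PF_e < \infty$ makes $n^{-1/2}E^*\|S_n\|_\FF$ bounded uniformly in $n$. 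Then I would integrate the conditional Ottaviani bound over the data to get the unconditional statement.

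Concretely, the cleanest route may be to bypass Ottaviani by directly invoking the fact (already used implicitly in the proof of Theorem~\ref{multiplier}) that the sequential multiplier process $\widetilde{\Z}_n + \sqrt{\lambda_n}\,\Z_n\cdot(\text{centering})$ — more precisely the process $(s,f)\mapsto n^{-1/2}\sum_{i=1}^{\ip{ns}}\xi_i f(X_i)$, which is $\widetilde{\Z}_n(s,f) + \lambda_n(s)\sqrt{n}\,(Pf)\,\bar\xi_{\ip{ns}} $ up to bookkeeping — is asymptotically tight in $\ell^\infty([0,1]\times\FF)$: $\widetilde\Z_n$ converges weakly by Theorem~\ref{multiplier}, and the remaining term $\lambda_n(s)(Pf)\sqrt n \bar\xi_{\ip{ns}}$ is handled since $\sqrt n\,\bar\xi_{\ip{ns}} = n^{-1/2}\sum_{i=1}^{\ip ns}\xi_i$ converges weakly to a Brownian motion (scalar) and $f\mapsto Pf$ is bounded by $PF_e$ over $\FF$. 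Weak convergence (or even asymptotic tightness) of a sequence in $\ell^\infty([0,1]\times\FF)$ immediately yields boundedness in outer probability of its sup-norm, which is exactly the assertion with $[t_1,t_2]=[0,1]$, and a fortiori for any subinterval. The main obstacle I anticipate is the measurability/outer-expectation bookkeeping in the Ottaviani argument (the $\|S_k\|_\FF$ need not be measurable, so one must work with measurable covers as in Lemma~\ref{tildeZn_hatZn}) and verifying that the multiplier inequality applies under only $PF_e<\infty$ rather than $PF_e^2<\infty$; the second, tightness-based route largely sidesteps the first difficulty by leaning on Theorem~\ref{multiplier}, so I would present that as the primary argument.
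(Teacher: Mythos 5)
Your primary (tightness-based) route is essentially the paper's own argument: it bounds the process by $\|\widetilde{\Z}_n(s,f)\|_\FF + PF_e\,\bigl|n^{-1/2}\sum_{i=1}^{\ip{ns}}\xi_i\bigr|$ and then uses weak convergence of $\widetilde{\Z}_n$ in $\ell^\infty([0,1]\times\FF)$ (which requires only the Donsker property and the multiplier moment condition, not $PF_e^2<\infty$) together with weak convergence of the scalar partial-sum process to Brownian motion, so the Ottaviani detour is unnecessary. The only point you gloss over—upgrading the $\limsup$ bound coming from asymptotic tightness to a bound valid for all $n\in\N$ by noting that for each fixed $n$ the supremum is dominated by an almost surely finite random variable—is exactly the small bookkeeping step the paper carries out.
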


\begin{proof} 
Using the triangle inequality and the envelope $F_e$, for any $s \in [0,1]$ and $f \in \FF$, we have 
\begin{align*}
\left\| \frac{1}{\sqrt{n}} \sum_{i=1}^{\ip{ns}} \xi_i f(X_i) \right\|_{\FF} \le & \left\| \frac{1}{\sqrt{n}} \sum_{i=1}^{\ip{ns}} \xi_i \{ f(X_i)- Pf \} \right\|_{\FF} + \left\| \frac{1}{\sqrt{n}} \sum_{i=1}^{\ip{ns}} \xi_i Pf \right\|_{\FF} \\
\le &\left\| \widetilde{\Z}_n(s,f) \right\|_{\FF}+ PF_e \times \left| \frac{1}{\sqrt{n}} \sum_{i=1}^{\ip{ns}} \xi_i \right|. 
\end{align*}

Let $0 \leq t_1 \leq t_2 \leq 1$. Since $\widetilde{\Z}_n \leadsto \Z_P$ in $\ell^\infty([0,1] \times \FF)$, we have that $\sup_{s \in [t_1,t_2]} \| \widetilde{\Z}_n(s,f)\|_{\FF}$ is asymptotically tight. Let $\vep > 0$. Then, there exists $M' > 0$ such that
$$
\limsup_{n \to \infty} \Pr^* \left( \sup_{s \in [t_1,t_2] } \left\| \widetilde{\Z}_n(s,f) \right\|_{\FF} > M' \right) < \vep/4.
$$
It follows that, for $n \geq n_\vep$, 
$$
\sup_{n \geq n_\vep} \Pr^* \left( \sup_{s \in [t_1,t_2] } \left\| \widetilde{\Z}_n(s,f) \right\|_{\FF} > M' \right) < \vep/2.
$$
Since $\sup_{s \in [t_1,t_2] } \| \widetilde{\Z}_n(s,f) \|_{\FF}$ is bounded by an almost surely finite random variable for any $n < n_\vep$, there exists $M'' \geq M'$ such that 
$$
\sup_{n \in \N} \Pr^* \left( \sup_{s \in [t_1,t_2] } \left\| \widetilde{\Z}_n(s,f) \right\|_{\FF} > M'' \right) < \vep/2.
$$

Similarly, $s \mapsto n^{-1/2} \sum_{i=1}^{\ip{ns}} \xi_i$ converges weakly to Brownian motion in $\ell^\infty([0,1])$, and therefore $\sup_{s \in [t_1,t_2]} | n^{-1/2} \sum_{i=1}^{\ip{ns}} \xi_i |$ is uniformly tight. Proceeding as above, there exists $M''' > 0$ such that
$$
\sup_{n \in \N} \Pr \left( \sup_{s \in [t_1,t_2] }\left| \frac{1}{\sqrt{n}} \sum_{i=1}^{\ip{ns}} \xi_i \right| > M''' \right) < \vep/2.
$$
We get the desired result with $M = M''+ PF_e \times M'''$. 
\end{proof}


\begin{proof}[\bf Proof of Theorem \ref{thm_H1}~(i).] 
First, notice that, for any $s \in [0,1]$ and any $f \in \FF$, 
$$
K_t(s,f) =\left\{
\begin{array}{ll}
s P_1 f - s \{ t P_1 f + (1-t) P_2 f \} & \mbox{if } s \leq t, \\
t P_1 f + (s - t) P_2 f - s \{ t P_1 f + (1-t) P_2 f \} & \mbox{if } s > t. \\
\end{array}
\right.
$$
Let $\K_{n,t}(s) = \| \lambda_n(s) \left\{ 1 - \lambda_n(s) \right\} (
\P_{\ip{ns}}f - \P_{n-\ip{ns}}^\star f  ) - K_t(s,f) \|_\FF$. Clearly,
$$
\sup_{s \in [0,1]} \K_{n,t}(s) = \max \left\{ \sup_{s \in [0,t]} \K_{n,t}(s),\sup_{s \in (t,1]} \K_{n,t}(s) \right\}.
$$
Furthermore, for any $s \in [0,1]$ and $f \in \FF$, 
$$
\lambda_n(s) \left\{ 1 - \lambda_n(s) \right\} \left(
\P_{\ip{ns}}f - \P_{n-\ip{ns}}^\star f  \right) = \lambda_n(s)  \left(\P_{\ip{ns}}f - \P_n f \right).
$$
Hence, 
\begin{equation}
\label{decomp}
\sup_{s \in [0,t]} \K_{n,t}(s) \leq \sup_{s \in [0,t]} \left\| \lambda_n(s) \P_{\ip{ns}}f - s P_1 f \right\|_\FF + \sup_{s \in [0,1]} \left\| \lambda_n(s) \P_n f  - s \{ t P_1 f + (1-t) P_2 f \} \right\|_\FF.
\end{equation}
The first term on the right of the previous inequality is smaller than
\begin{multline*}
\sup_{s \in [0,t]} \left\| \lambda_n(s) \left( \P_{\ip{ns}}f - P_1 f \right) \right\|_\FF + \sup_{s \in [0,t]} \left\| \{ \lambda_n(s) - s \} P_1 f \right\|_\FF \\ = \frac{1}{\sqrt{n}} \sup_{s \in [0,t]} \left\| \Z_n(s,f) \right\|_\FF + \sup_{s \in [0,t]}  |\lambda_n(s) - s | \times \left\| P_1 f \right\|_\FF
\end{multline*}
and therefore converges in outer probability to zero because $\sup_{s \in [0,t]} \left\| \Z_n(s,f) \right\|_\FF$ converges in distribution. The second supremum on the right of~(\ref{decomp}) is smaller than  
$$
\sup_{s \in [0,1]} | \lambda_n(s) - s | \times \left\| \P_n f \right\|_\FF + \left\| \P_n f  - \{ t P_1 f + (1-t) P_2 f \} \right\|_\FF
$$
and therefore converges in outer probability to zero because, $\FF$ being $P_1$ and $P_2$-Donsker, $\P_n f$ converges in outer probability to $t P_1 f + (1-t) P_2 f$ uniformly in $f \in \FF$.

Similarly,
\begin{multline*}
\sup_{s \in (t,1]} \K_{n,t}(s) \leq \sup_{s \in (t,1]} \left\| \lambda_n(s) \P_{\ip{ns}}f - t P_1 f - (s - t) P_2 f \right\|_\FF \\ + \sup_{s \in [0,1]} \left\| \lambda_n(s) \P_n f  - s \{ t P_1 f + (1-t) P_2 f \} \right\|_\FF.
\end{multline*}
We already know that the second supremum on the right converges to zero in outer probability. Using the fact that, for $s \in (t,1]$, $\lambda_n(s) \P_{\ip{ns}} = \lambda_n(t) \P_{\ip{nt}} + \{ \lambda_n(s) - \lambda_n(t) \} \P_{\ip{ns} - \ip{nt}}^{\star,\ip{ns}}$, where $\P_{\ip{ns} - \ip{nt}}^{\star,\ip{ns}} = (\ip{ns} - \ip{nt})^{-1} \sum_{i=\ip{nt}+1}^{\ip{ns}} \delta_{X_i}$, the first supremum is smaller than
\begin{multline*}
\left\| \lambda_n(t) \left\{ \P_{\ip{nt}}f - P_1 f \right\} \right\|_\FF + \sup_{s \in (t,1]} \left\| \{ \lambda_n(s) - \lambda_n(t) \} \left\{ \P_{\ip{ns} - \ip{nt}}^{\star,\ip{ns}} f - P_2 f \right\} \right\|_\FF \\ + | \lambda_n(t) - t| \left\{ \left\|  P_1 f \right\|_\FF + \left\|  P_2 f \right\|_\FF \right\} + \sup_{s \in (t,1]}  |\lambda_n(s) - s | \times \left\| P_2 f \right\|_\FF
\end{multline*}
and converges to zero in outer probability because $\sqrt{n}  \| \lambda_n(t) \{ \P_{\ip{nt}}f - P_1 f \} \|_\FF = \| \Z_n(t,f) \|_\FF$ and $\sqrt{n} \sup_{s \in (t,1]} \| \{ \lambda_n(s) - \lambda_n(t) \} \{ \P_{\ip{ns} - \ip{nt}}^{\star,\ip{ns}} f - P_2 f \} \|_\FF$ converge in distribution.
\end{proof}


\begin{proof}[\bf Proof of Theorem \ref{thm_H1}~(ii).] 
Let us first show that $\sup_{s \in [0,1]} \| \widehat{\Z}_n(s,f) \|_\FF$ is bounded in outer probability. Since
$$
\sup_{s \in [0,1]} \left\| \widehat{\Z}_n(s,f) \right\|_\FF = \max \left\{ \sup_{s \in [0,t]} \left\| \widehat{\Z}_n(s,f) \right\|_\FF , \sup_{s \in (t,1]} \left\| \widehat{\Z}_n(s,f) \right\|_\FF \right\},
$$
it is sufficient to verify the claim for each of the suprema in the maximum. From the definition of $\widehat{\Z}_n$, under $H_1$, the first supremum converges weakly from the continuous mapping theorem and is therefore bounded in outer probability. For the second supremum, we write
$$
\sup_{s \in (t,1]} \left\| \widehat{\Z}_n(s,f) \right\|_\FF = \max_{\ip{nt}< k\le n} \left\| \widehat{\Z}_n \left( \frac{k}{n},f \right) \right\|_\FF \leq A_n + B_n + C_n + D_n,
$$
where 
\begin{gather*}
A_n = \frac{1}{\sqrt{n}} \left\| \sum_{i=1}^{\ip{nt}} \xi_i f(X_i) \right\|_{\FF}, \qquad B_n =  \max_{\ip{nt}< k\le n} \frac{1}{\sqrt{n}} \left\| \sum_{i=\ip{nt}+1}^{k} \xi_i f(X_i) \right\|_{\FF}\\
C_n =  \max_{\ip{nt}< k\le n} \frac{1}{\sqrt{n}} \left\| \left( \sum_{i=1}^{k} \xi_i \right) \times \frac{1}{k}\sum_{j=1}^{\ip{nt}}f(X_j) \right\|_{\FF} ,
\end{gather*}
and
$$
D_n =  \max_{\ip{nt}< k\le n} \frac{1}{\sqrt{n}} \left\| \left( \sum_{i=1}^{k} \xi_i \right)\times  \frac{1}{k}\sum_{j=\ip{nt}+1}^{k} f(X_j) \right\|_{\FF}.
$$
The quantity $A_n$ is clearly bounded in outer probability by Lemma \ref{lem:boundprob_1} with $t_1=t_2=t$. For $B_n$, we can write
\begin{multline*}
B_n =  \max_{1 \le k \le n - \ip{nt}} \frac{1}{\sqrt{n}} \left\| \sum_{i=1}^k \xi_{i+\ip{nt}} f(X_{i+\ip{nt}}) \right\|_{\FF} \leq \sup_{s \in [0,1-t]}  \frac{1}{\sqrt{n}} \left\| \sum_{i=1}^{\ip{ns}} \xi_{i+\ip{nt}} f(X_{i+\ip{nt}}) \right\|_{\FF} \\ + \frac{1}{\sqrt{n}} \left\| \sum_{i=1}^{\ip{n(1-t)}} \xi_{i+\ip{nt}} f(X_{i+\ip{nt}}) \right\|_{\FF} + \frac{1}{\sqrt{n}} |\xi_n| F_e(X_n),
\end{multline*}
where the two last terms on the right come from the fact that $0 \leq n - \ip{nt} - \ip{n(1-t)} \leq 1$. The first two terms on the right are bounded in outer probability by Lemma \ref{lem:boundprob_1}, and so is the third one because it converges almost surely to zero. It follows that $B_n$ is bounded in outer probability. Now, $C_n$ is bounded above by 
\begin{multline*}
\max_{\ip{nt}< k\le n} \left\{ \frac{1}{\sqrt{n}} \left| \sum_{i=1}^{k} \xi_i \right|  \left\| \frac{1}{k}\sum_{j=1}^{\ip{nt}}f(X_j) \right\|_{\FF} \right\} \leq  \max_{\ip{nt}< k\le n} \left\{ \frac{1}{\sqrt{n}} \left| \sum_{i=1}^{k} \xi_i \right|  \times \frac{1}{k}\sum_{j=1}^{\ip{nt}} F_e(X_j) \right\} \\ \leq \max_{\ip{nt}< k\le n} \frac{1}{\sqrt{n}} \left| \sum_{i=1}^{k} \xi_i \right|  \times \max_{\ip{nt}< k\le n} \frac{1}{k}\sum_{j=1}^{\ip{nt}} F_e(X_j) \leq  I_n,
\end{multline*}
where 
$$
I_n = \sup_{s \in [0,1]} \left\{ \frac{1}{\sqrt{n}} \left| \sum_{i=1}^{\ip{ns}} \xi_i \right| \right\} \times \max_{1\le k\le n} \frac{1}{k}\sum_{j=1}^{k} F_e(X_j).
$$
Similarly, we have that $D_n \leq I_n$. To show that $C_n$ and $D_n$ are bounded in outer probability, we will show that $I_n$ is bounded in probability. 

Since $n^{-1} \sum_{j=1}^{n} F_e(X_j)$ converges to $tP_1F_e+(1-t)P_2F_e<\infty$ almost surely, we have that $\sup_{n\in \N} n^{-1} \sum_{j=1}^{n} F_e(X_j)$ is an almost surely finite random variable. The fact that $I_n$ is bounded in probability then follows from the weak convergence of $s \mapsto n^{-1/2} \sum_{i=1}^{\ip{ns}} \xi_i$ to Brownian motion in $\ell^\infty([0,1])$, which implies that $\sup_{s \in [0,1]} | n^{-1/2} \sum_{i=1}^{\ip{ns}} \xi_i |$ is uniformly tight. 

Thus, $A_n$, $B_n$, $C_n$, and $D_n$ are all bounded in outer probability, which implies that $\sup_{s \in (t,1]} \| \widehat{\Z}_n(s,f) \|_\FF$ is bounded in outer probability, and therefore that $\sup_{s \in [0,1]} \| \widehat{\Z}_n(s,f) \|_\FF$ is bounded in outer probability.

The analogous result for the process $\widehat{\Z}_n^\star$ follows from the fact that $\sup_{s \in [0,1]} \| \widehat{\Z}_n^\star(s,f) \|_\FF$ (written for instance as a maximum over $1 \leq \ip{ns} \leq n$) is nothing else than the version of $\sup_{s \in [0,1]} \| \widehat{\Z}_n(s,f) \|_\FF$ computed from the sequence $(\xi_n,X_n),(\xi_{n-1},X_{n-1}),\dots,(\xi_1,X_1)$ which has the same distribution as the original sequence. The desired result is finally an immediate consequence of~(\ref{hatDn}).
\end{proof}


\begin{proof}[\bf Proof of Theorem \ref{thm_H1}~(iii).] For any $s \in [0,t]$ and $f \in \FF$, we can write
$$
\widecheck{\Z}_n(s,f) = \frac{1}{\sqrt{n}} \sum_{i=1}^{\ip{ns}} \xi_i \{f(X_i) - P_1 f \} -  \{ \P_n f - P_1 f \} \times \frac{1}{\sqrt{n}} \sum_{i=1}^{\ip{ns}} \xi_i,
$$
while, for any $s \in [t,1]$ and $f \in \FF$, we have
\begin{multline}
\label{decomp2}
\widecheck{\Z}_n(s,f) =  \frac{1}{\sqrt{n}} \sum_{i=1}^{\ip{nt}} \xi_i \{f(X_i) - P_1 f \} -  \{ \P_n f - P_1 f \} \times \frac{1}{\sqrt{n}} \sum_{i=1}^{\ip{nt}} \xi_i \\ + \frac{1}{\sqrt{n}} \sum_{i=\ip{nt} +1}^{\ip{ns}} \xi_i \{f(X_i) - P_2 f \} -  \{ \P_n f - P_2 f \} \times \frac{1}{\sqrt{n}} \sum_{i=\ip{nt}+1}^{\ip{ns}} \xi_i.
\end{multline}
Now, let us show that
\begin{equation}
\label{one}
\left( (s,f) \mapsto \frac{1}{\sqrt{n}} \sum_{i=1}^{\ip{ns}} \xi_i \{f(X_i) - P_1 f \}, (s,f) \mapsto \frac{1}{\sqrt{n}} \sum_{i=1}^{\ip{ns}} \xi_i \right) \leadsto \left(\Z_{P_1},(s,f) \mapsto B_1(s)\right)
\end{equation}
in $\{\ell^\infty([0,t] \times \FF)\}^2$, where $B_1$ is Brownian motion and is independent of $\Z_{P_1}$. Using the multivariate central limit theorem, it can be verified that we have weak convergence of the finite-dimensional distributions. Joint asymptotic tightness follows from the weak convergence of $s \mapsto n^{-1/2} \sum_{i=1}^{\ip{ns}} \xi_i$ to $B_1$ in $\ell^\infty([0,1])$, and the weak convergence of $\widetilde{\Z}_n$ to $\Z_{P_1}$ in $\ell^\infty([0,t] \times \FF)$. Since the two component processes on the left of~(\ref{one}) are uncorrelated, their weak limits are independent. Similarly, we have that 
\begin{multline}
\label{two}
\left((s,f) \mapsto \frac{1}{\sqrt{n}} \sum_{i=\ip{nt} +1}^{\ip{ns}} \xi_i \{f(X_i) - P_2 f \}, (s,f) \mapsto \frac{1}{\sqrt{n}} \sum_{i=\ip{nt} +1}^{\ip{ns}} \xi_i \right) \\ \leadsto \left( (s,f) \mapsto \Z_{P_2}(s-t,f),(s,f) \mapsto B_2(s-t) \right)
\end{multline}
in $\{\ell^\infty([t,1] \times \FF)\}^2$, where $B_2$ is Brownian motion independent of $\Z_{P_2}$, and $\left(\Z_{P_2},B_2\right)$ is independent of $\left(\Z_{P_1},B_1\right)$.

Combining the fact that $\P_n f$ converges in outer probability to $t P_1 f + (1-t) P_2 f$ uniformly in $f \in \FF$ with~(\ref{one}), we obtain from the continuous mapping theorem that, in $\ell^\infty([0,t] \times \FF)$, $\widecheck{\Z}_n$ converges weakly to $(s,f) \mapsto \W_{P_1}(s,f) = \Z_{P_1}(s,f) - (1-t) (P_2 f - P_1 f) B_1(s)$. Using~(\ref{two}) similarly, it can be verified that, in $\ell^\infty([t,1] \times \FF)$,
$$
(s,f) \mapsto \frac{1}{\sqrt{n}} \sum_{i=\ip{nt} +1}^{\ip{ns}} \xi_i \{f(X_i) - P_2 f \} -  \{ \P_n f - P_2 f \} \times \frac{1}{\sqrt{n}} \sum_{i=\ip{nt}+1}^{\ip{ns}} \xi_i
$$
converges weakly to $(s,f) \mapsto \W_{P_2}(s,f) = \Z_{P_2}(s-t,f) - t (P_2 f - P_1 f) B_1(s-t)$. By independence of first two terms in~(\ref{decomp2}) with the last two terms, we then obtain that, in $\ell^\infty([t,1] \times \FF)$, $\widecheck{\Z}_n$ converges weakly  to $(s,f) \mapsto \W_{P_1}(t,f) + \W_{P_2}(s,f)$, which implies that $\widecheck{\Z}_n$ converges weakly in $\ell^\infty([0,1] \times \FF)$. The desired result finally follows from the fact that $\widecheck{\D}_n(s,f) = \widecheck{\Z}_n(s,f) - \lambda_n(s) \widecheck{\Z}_n(1,f)$ and the continuous mapping theorem. 
\end{proof}

\subsection{Proofs of Propositions~\ref{prop_S_T_H0} and~\ref{prop_S_T_H1}}

We state a lemma before giving the proofs of the propositions.

Let $A$ be the space of bounded Borel measurable functions on $\R^d$ and let $B$ be the space of c.d.f.s of Borel probability measures on $\R^d$. The spaces $A$ and $B$ are subsets of $\ell^\infty(\Rbar^d)$ and the topologies on $A$ and $B$ are the ones induced by uniform convergence. The following result (the help of Johan Segers is gratefully acknowledged) will allow us to apply the continuous mapping theorem in the proof of Proposition~\ref{prop_S_T_H0}. 

\begin{lem}
\label{phi}
Let $\phi:A \times B \to \R$ be defined by $\phi(a,b) = \int_{\R^d} a \dd b$. The map $\phi$ is continuous at each $(a,b) \in A \times B$ such that $a$ is continuous on $\R^d$.
\end{lem}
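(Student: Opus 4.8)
The plan is to use that $A\times B$, with the product of the two uniform topologies, sits inside the metric space $\ell^\infty(\Rbar^d)\times\ell^\infty(\Rbar^d)$, so it suffices to prove sequential continuity. Fix $(a,b)\in A\times B$ with $a$ continuous on $\R^d$, and let $(a_n,b_n)$ be a sequence in $A\times B$ with $\sup_{x\in\Rbar^d}|a_n(x)-a(x)|\to 0$ and $\sup_{x\in\Rbar^d}|b_n(x)-b(x)|\to 0$; write $P_{b_n}$ and $P_b$ for the Borel probability measures on $\R^d$ whose c.d.f.s are $b_n$ and $b$. First I would use the elementary decomposition
\[
\left|\phi(a_n,b_n)-\phi(a,b)\right|\le\left|\int_{\R^d}(a_n-a)\,\dd b_n\right|+\left|\int_{\R^d}a\,\dd b_n-\int_{\R^d}a\,\dd b\right|
\]
and bound the two terms on the right separately.

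The first term is immediate: since $b_n$ is the c.d.f.\ of a probability measure, $\int_{\R^d}\dd b_n=1$, so $\left|\int_{\R^d}(a_n-a)\,\dd b_n\right|\le\sup_{x\in\Rbar^d}|a_n(x)-a(x)|\to 0$. This step uses only the uniform convergence $a_n\to a$ and the fact that the $b_n$ have total mass $1$; no continuity (or anything beyond the Borel measurability built into membership in $A$) of the $a_n$ is needed.

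The second term is where the hypothesis that $a$ is continuous enters, and it is the crux. Uniform convergence of the c.d.f.s gives in particular $b_n(x)\to b(x)$ for every $x\in\R^d$, hence at every continuity point of $b$; by the standard characterisation of weak convergence in terms of convergence of c.d.f.s at the continuity points of the limit, $P_{b_n}$ converges weakly to $P_b$. Since $a$ is bounded and continuous on $\R^d$, the definition of weak convergence then yields $\int_{\R^d}a\,\dd b_n\to\int_{\R^d}a\,\dd b$, so the second term vanishes as well. Combining the two bounds gives $\phi(a_n,b_n)\to\phi(a,b)$, which is the asserted continuity of $\phi$ at $(a,b)$.

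The one point that warrants care is the passage from pointwise convergence of the multivariate c.d.f.s to weak convergence of the associated probability measures. If one does not wish to invoke the $d$-dimensional Helly--Bray/Portmanteau theorem directly, it can be supplied by hand: the family $\{P_{b_n}\}$ is tight, because $b$ being a proper c.d.f.\ forces the $P_{b_n}$-mass outside a large enough box $(-K,K]^d$ to be uniformly small (this again only uses $\sup_{x\in\Rbar^d}|b_n(x)-b(x)|\to0$, now evaluated at the $2^d$ corners of that box), and every weak subsequential limit has c.d.f.\ agreeing with $b$ at all continuity points of $b$ and hence equals $P_b$, so the whole sequence converges weakly to $P_b$. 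One could also bypass measures altogether by writing integrals of simple functions supported on half-open boxes via the usual alternating-sum formula in the corner values of the c.d.f., whose values are continuous in the c.d.f., and then approximating the bounded continuous $a$ uniformly on large boxes and controlling the tails by tightness. The remaining details are routine.
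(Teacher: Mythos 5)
Your proposal is correct and follows essentially the same route as the paper's proof: the same triangle-inequality decomposition, the same bound $\sup_{x}|a_n(x)-a(x)|$ on the first term using that $b_n$ has total mass one, and the same treatment of the second term via pointwise convergence of the c.d.f.s, weak convergence of the associated measures, and the boundedness and continuity of $a$. The only difference is that you spell out the multivariate Helly--Bray/tightness justification that the paper compresses into a citation of the Portmanteau lemma, which is a welcome but inessential elaboration.
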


\begin{proof} 
Let $(a_n, b_n)$ be a sequence in $A \times B$ such that $\sup_{x \in \R^d} |a_n(x) - a(x)| \to 0$ and $\sup_{x \in \R^d} |b_n(x) - b(x)| \to 0$. It is sufficient to show that $\int_{\R^d} a_n \, \dd b_n \to \int_{\R^d} a \, \dd b$. By the triangle inequality,
$$
\left| \int_{\R^d} a_n \, \dd b_n - \int_{\R^d} a \, \dd b \right|
\le \left| \int_{\R^d} a_n \, \dd b_n - \int_{\R^d} a \, \dd b_n \right|
+ \left| \int_{\R^d} a \, \dd b_n - \int_{\R^d} a \, \dd b \right|.
$$
For the first term on the right of this inequality, we have
\begin{multline*}
  \left| \int_{\R^d} a_n \, \dd b_n - \int_{\R^d} a \, \dd b_n \right| \le \int_{\R^d} | a_n - a | \, \dd b_n \\ \le \sup_{x \in \R^d} |a_n(x) - a(x)|  \int_{\R^d} \dd b_n = \sup_{x \in \R^d} |a_n(x) - a(x)| \to 0. 
\end{multline*}
For the second term, since $\sup_{x \in \R^d} |b_n(x) - b(x)| \to 0$, we have that $b_n(x) \to b(x)$ for every $x \in \R^d$, which, by the Portmanteau lemma 
and the continuity of the function $a$ implies that $\int_{\R^d} a \, \dd b_n \to \int_{\R^d} a \, \dd b$.
\end{proof}

\begin{proof}[\bf Proof of Proposition~\ref{prop_S_T_H0}.] A consequence of Theorem~\ref{thm_H0} and the fact that, under $H_0$, $F_n$ converges almost surely to $F_0$ (the c.d.f.~of $P_0$) uniformly in $x \in \R^d$ is that 
$$
\left( \D_n,\widehat{\D}_n^{(1)},\dots,\widehat{\D}_n^{(N)},\widecheck{\D}_n^{(1)},\dots,\widecheck{\D}_n^{(N)},F_n \right) \leadsto \left( \D_{P_0},\D_{P_0}^{(1)},\dots,\D_{P_0}^{(N)},\D_{P_0}^{(1)},\dots,\D_{P_0}^{(N)},F_0 \right)
$$ 
in $\{ \ell^\infty([0,1] \times \Rbar^d) \}^{(2N+2)}$. Using the map $\phi$ defined in Lemma~\ref{phi}, it is easy to see that $S_{n,\vee} = \sup_{s\in[0,1]} \phi [ \{ \D_n (s,\cdot) \}^2,F_n ]$, that $\widehat{S}_{n,\vee}^{(j)} = \sup_{s\in[0,1]} \phi [\{ \widehat{\D}_n^{(j)} (s,\cdot) \}^2,F_n ]$ and that $\widecheck{S}_{n,\vee}^{(j)} = \sup_{s\in[0,1]} \phi [ \{ \widecheck{\D}_n^{(j)} (s,\cdot) \}^2,F_n ]$, $j \in \{1,\dots,N\}$. Furthermore, the limiting process $\D_{P_0}$ is continuous almost surely. The result then follows from Lemma~\ref{phi} and the continuous mapping theorem. 
\end{proof}

\begin{proof}[\bf Proof of Proposition~\ref{prop_S_T_H1}.]
From Theorem~\ref{thm_H1}, we have that
$$
\sup_{s \in [0,1]} \sup_{x \in \R^d} \left| \frac{\D_n(s,x)}{\sqrt{n}} - K_t(s,x) \right| \p 0,
$$
where $K_t(s,x)= \{ F_1(x)-F_2(x) \} (s\wedge t) \{ 1- (s\vee t) \}$, i.e., $\D_n/\sqrt{n} \p K_t$ in $\ell^\infty([0,1] \times \Rbar^d)$. Also, under $H_1$, $F_n$ converges almost surely to $F_t = t F_1 + (1-t) F_2$ uniformly in $x \in \R^d$. Then, 
\begin{equation}
\label{ineq}
\sup_{s \in [0,1]} \left|  \int_{\R^d} \frac{\{\D_n(s,x)\}^2}{n} \dd F_n(x) - \int_{\R^d} \{K_t(s,x)\}^2 \dd F_t(x) \right| \leq A_n + B_n,
\end{equation}
where
$$
A_n = \sup_{s \in [0,1]} \left| \int_{\R^d} \frac{\{\D_n(s,x)\}^2}{n} \dd F_n(x) - \int_{\R^d} \{K_t(s,x)\}^2 \dd F_n(x) \right| 
$$
and
$$
B_n =  \sup_{s \in [0,1]} \left| \int_{\R^d} \{K_t(s,x)\}^2 \dd F_n(x) - \int_{\R^d} \{K_t(s,x)\}^2 \dd F_t(x) \right|.
$$
We have 
$$
A_n \leq \sup_{s \in [0,1]} \sup_{x \in \R^d} \left| \frac{\{\D_n(s,x)\}^2}{n} - \{K_t(s,x)\}^2 \right| \p 0
$$
by the continuous mapping theorem, and, with the notation $g_s = \{ K_t(s,\cdot) \}^2$, $s \in [0,1]$,
\begin{multline*}
B_n 
= \sup_{s \in [0,1]} \left| \lambda_n(t) \P_{\ip{nt}} g_s + \{ 1-\lambda_n(t) \}\P_{n-\ip{nt}}^\star g_s - t P_1 g_s - (1-t) P_2 g_s \} \right| \\ \leq \lambda_n(t) \sup_{s \in [0,1]} \left| (\P_{\ip{nt}} - P_1) g_s \right| + \{1 - \lambda_n(t) \} \sup_{s \in [0,1]} \left| (\P_{n - \ip{nt}}^\star - P_2 ) g_s \right| + 2 |\lambda_n(t) - t|
\end{multline*}
because $\sup_{s \in [0,1]} | P_1 g_s| \leq 1$ and $\sup_{s \in [0,1]} | P_2 g_s| \leq 1$. Now, 
$$
 \sup_{s \in [0,1]} \left| (\P_{\ip{nt}} - P_1) g_s \right| \leq \left| (\P_{\ip{nt}} - P_1) (F_1 - F_2)^2 \right| \p 0
$$
and
$$
\sup_{s \in [0,1]} \left| (\P_{n - \ip{nt}}^\star - P_2 ) g_s \right| \leq \left| (\P_{n - \ip{nt}}^\star - P_2 ) (F_1 - F_2)^2 \right| \p 0
$$
by the law of large numbers, which implies that $B_n \p 0$. It follows from~(\ref{ineq}) that 
$$
s \mapsto \int_{\R^d} \frac{\{\D_n(s,x)\}^2}{n} \dd F_n(x) \quad  \p  \quad s \mapsto \int_{\R^d} \{K_t(s,x)\}^2 \dd F_t(x)
$$
in $\ell^\infty([0,1])$, from which, using the continuous mapping theorem, we obtain that 
$$
S_{n,\vee}/n \p \sup_{s \in [0,1]} \int_{\R^d} \{ K_t(s,x) \}^2 \dd F_t(x).
$$
Since $F_1$ and $F_2$ are distinct and $K_t(s,x)= \{ F_1(x)-F_2(x) \} (s\wedge t)\{1- (s\vee t)\}$, we have that $\sup_{s \in [0,1]} \int_{\R^d} \{ K_t(s,x) \}^2 \dd F_t(x) > 0$, which implies that $S_{n,\vee} \p +\infty$.

Now, let $j \in \{1,\dots,N\}$. The claim for $\widehat{S}_{n,\vee}^{(j)}$ follows from the inequality
$$
\widehat{S}_{n,\vee}^{(j)} = \sup_{s \in [0,1]} \int_{\R^d} \{ \widehat{\D}_n^{(j)}(s,x)\}^2 \dd F_n(x) \leq \sup_{s \in [0,1]} \sup_{y \in \R^d} \{ \widehat{\D}_n^{(j)}(s,y)\}^2 \int_{\R^d} \dd F_n(x) \leq \{T_{n,\vee}^{(j)}\}^2,
$$
and the fact that $\widehat{T}_{n,\vee}^{(j)}$ is bounded in outer probability from Theorem~\ref{thm_H1}. It remains finally to prove the claim for $\widecheck{S}_{n,\vee}^{(j)}$. A consequence of Theorem~\ref{thm_H1} is that $(\widecheck{\D}_n^{(j)},F_n)$ converges weakly in $\{ \ell^\infty([0,1] \times \Rbar^d) \}^2$, which, combined with the fact that $\widecheck{S}_{n,\vee}^{(j)} = \sup_{s\in[0,1]} \phi [ \{ \widecheck{\D}_n^{(j)} (s,\cdot) \}^2,F_n ]$ and Lemma~\ref{phi}, implies that $\widecheck{S}_{n,\vee}^{(j)}$ converges in distribution, and hence, that it is bounded in outer probability.
\end{proof}

\section{Implementation of the tests based on $U_{n,k}$ and $V_{n,k}$}
\label{implementation}

Let $m > 2$ be an integer and let $a_1,\dots,a_m$ be elements of $\Smc_d^+$ uniformly spaced over $\Smc_d^+$. For $k \in \{1,\dots,n-1\}$, the following numerical approximations are then considered:
\begin{align*}
U_{n,k} &\approx \frac{1}{m} \sum_{l=1}^m \frac{1}{n} \sum_{q=1}^n \left\{ \D_n \left( \frac{k}{n},a_l,a_l^\top X_q \right) \right\}^2 \\ 
&= \frac{k^2 (n - k)^2}{n^4 m} \sum_{l=1}^m \sum_{q=1}^n \left\{ \frac{1}{k} \sum_{i=1}^k \1(a_l^\top X_i \leq a_l^\top X_q) - \frac{1}{n-k} \sum_{i=k+1}^n \1(a_l^\top X_i \leq a_l^\top X_q) \right\}^2, 
\end{align*}
and
\begin{align*}
V_{n,k} &\approx \max_{1 \leq l \leq m} \max_{1 \leq q \leq n} \left|  \D_n \left( \frac{k}{n},a_l,a_l^\top X_q \right) \right| \\
&= \frac{k (n - k)}{n^{3/2}} \max_{1 \leq l \leq m} \max_{1 \leq q \leq n} \left|  \frac{1}{k} \sum_{i=1}^k \1(a_l^\top X_i \leq a_l^\top X_q) - \frac{1}{n-k} \sum_{i=k+1}^n \1(a_l^\top X_i \leq a_l^\top X_q) \right|.
\end{align*}

Next, recall that, for any $j \in \{1,\dots,N\}$, $\widehat{\D}_n^{(j)}$ (resp.\ $\widecheck{\D}_n^{(j)}$) is the version of $\widehat{\D}_n$ (resp.\ $\widecheck{\D}_n$) computed from $\xi_1^{(j)},\dots,\xi_n^{(j)}$. Proceeding as above, for any $j \in \{1,\dots,N\}$, the multiplier versions of $U_{n,k}$ and $V_{n,k}$ based on the process $\widehat{\D}_n^{(j)}$ are computed respectively as
\begin{multline*}
\widehat{U}_{n,k}^{(j)} \approx \frac{1}{mn^4}\sum_{l=1}^m \sum_{q=1}^n \left\{ (n-k) \sum_{i=1}^k (\xi_i^{(j)} - \bar \xi_k^{(j)}) \1(a_l^\top X_i \leq a_l^\top X_q) \right. \\ - \left.  k \sum_{i=k+1}^n (\xi_i^{(j)} - \bar \xi_{n-k}^{(j)}) \1(a_l^\top X_i \leq a_l^\top X_q) \right\}^2, 
\end{multline*}
and
\begin{multline*}
\widehat{V}_{n,k}^{(j)} \approx n^{-3/2} \max_{1 \leq l \leq m} \max_{1 \leq q \leq n} \left|  (n-k) \sum_{i=1}^k (\xi_i^{(j)} - \bar \xi_k^{(j)}) \1(a_l^\top X_i \leq a_l^\top X_q) \right. \\ - \left. k \sum_{i=k+1}^n (\xi_i^{(j)} - \bar \xi_{n-k}^{(j)}) \1(a_l^\top X_i \leq a_l^\top X_q) \right|.
\end{multline*}
Similarly, for any $j \in \{1,\dots,N\}$, the multiplier versions of $U_{n,k}$ and $V_{n,k}$ based on the process $\widecheck{\D}_n^{(j)}$ are computed respectively as
\begin{multline*}
\widecheck{U}_{n,k}^{(j)} \approx  \frac{1}{mn^4}\sum_{l=1}^m \sum_{q=1}^n \left[ (n-k) \sum_{i=1}^k \xi_i^{(j)} \{ \1(a_l^\top X_i \leq a_l^\top X_q) - F_{a_l,n}(a_l^\top X_q) \} \right. \\ \left. - k \sum_{i=k+1}^n \xi_i^{(j)} \{ \1(a_l^\top X_i \leq a_l^\top X_q) - F_{a_l,n}(a_l^\top X_q) \} \right]^2, 
\end{multline*}
and
\begin{multline*}
\widecheck{V}_{n,k}^{(j)} \approx n^{-3/2} \max_{1 \leq l \leq m} \max_{1 \leq q \leq n} \left| (n-k) \sum_{i=1}^k \xi_i^{(j)} \{ \1(a_l^\top X_i \leq a_l^\top X_q) - F_{a_l,n}(a_l^\top X_q) \} \right. \\- \left. k \sum_{i=k+1}^n \xi_i^{(j)} \{ \1(a_l^\top X_i \leq a_l^\top X_q) - F_{a_l,n}(a_l^\top X_q) \} \right|,
\end{multline*}
where $F_{a,n}$ is the empirical c.d.f.\ computed from the projected sample $a^\top X_1,\dots,a^\top X_n$.

\bibliographystyle{plainnat}
\bibliography{biblio}

\end{document}